\newtheorem{theorem}{Theorem}[section]
\newtheorem{lemma}[theorem]{Lemma}
\newtheorem{proposition}[theorem]{Proposition}
\DeclareMathOperator*{\argmax}{arg\,max}
\theoremstyle{remark}
\newtheorem{remark}[theorem]{Remark}
\theoremstyle{definition}
\newtheorem{definition}[theorem]{Definition}
\theoremstyle{assumption}
\newtheorem{assumption}[theorem]{Assumption}
\theoremstyle{definition}
\newtheorem{example}[theorem]{Example}
\newcommand{\rev}[1]{{\color{black} #1}}
\begin{document}
\title[Nash Equilibria in Optimal Reinsurance Bargaining]{Nash Equilibria in Optimal Reinsurance Bargaining}

\author{Michail Anthropelos}
\address{Department of Banking and Financial Management\\
University of Piraeus\\
Piraeus, Greece}
\email{anthropel@unipi.gr}
\thanks{M. Anthropelos is supported in part by the Research Center of the University of Piraeus}

\author{Tim J. Boonen}
\address{Amsterdam School of Economics\\
University of Amsterdam\\
Amsterdam, The Netherlands}
\email{T.J.Boonen@uva.nl}
\thanks{\rev{We thank two anonymous reviewers, Yichun Chi and Fangda Liu for their valuable comments.}}

\maketitle

\begin{abstract}

We introduce a strategic behavior in reinsurance bilateral transactions, where agents choose the risk preferences they will appear to have in the transaction. Within a wide class of risk measures, we identify agents' strategic choices to a range of risk aversion coefficients. It is shown that at the strictly beneficial Nash equilibria, agents appear homogeneous with respect to their risk preferences. While the game does not cause any loss of total welfare gain, its allocation between agents is heavily affected by the agents' strategic behavior. This allocation is reflected in the reinsurance premium, while the insurance indemnity remains the same in all strictly beneficial Nash equilibria. Furthermore, the effect of agents' bargaining power vanishes through the game procedure and the agent who gets more welfare gain is the one who has an advantage in choosing the common risk aversion at the equilibrium.
\smallskip

\noindent\begin{bfseries}Key-words\end{bfseries}: optimal reinsurance contract; Nash bargaining; Nash equilibrium; strategically chosen risk aversion; risk-sharing games.
\end{abstract}
\section{Introduction}
This paper proposes a game-theoretic bargaining \rev{approach} in optimal reinsurance and the strategic behavior of the insurer and reinsurer, where the strategic set refers to a risk preference parameterization. In particular, the insurer and the reinsurer strategically choose the risk preferences that will apply to the reinsurance transaction. For any submitted risk preferences, a reinsurance contract is given by an indemnity function and a premium. This paper particularly shows that the Nash equilibrium occurs when one agent (insurer or reinsurer) mimics the preferences of the other agent, which establishes an \textit{endogenous homogeneity} on the effective risk preferences.

Originating from the seminal work of \cite{borch1960} and \cite{arrow1963}, the optimal reinsurance problem is traditionally studied as the optimization of the \rev{expected} utility of the insurer given a premium principle. A premium principle is used to represent the interests of the reinsurer, and popular examples are the expected value principle, the standard deviation principle, and the exponential principle (see, e.g., \cite{kaas2008} for an overview). \cite{raviv1979}, \cite{aase2009}, and \cite{boonen2016} propose a different perspective, where the reinsurer is modelled as an agent that bargains with the insurer. We follow these approaches and consider preferences that are comonotonic additive. Such preferences are first characterized by \cite{schmeidler1986}, while used in optimal reinsurance \rev{problems} by \cite{boonen2016}.

In the formulation of the game, we propose that for any submitted risk preferences by the agents, the \textit{asymmetric Nash bargaining solution} produces a reinsurance contract. The asymmetric Nash bargaining solution can be seen as a generalization of the Nash bargaining solution, which is a seminal solution concept in cooperative bargaining \citep{nash1950}. \rev{In simplified terms, asymmetric Nash bargaining models the way the insurer and the reinsurer are going to share the gain that is created by the reinsurance transaction. In fact, within this concept, using a simple parametrization, we may also impose agents with different bargaining power.}

There have been many papers that have studied the Nash bargaining solution \citep[see, e.g.,][]{vandamme1986,rubinstein1992,britz2010}, and it is arguably the most popular solution concept in cooperative bargaining. \cite{kalai1977} shows that such asymmetric Nash bargaining solutions are Pareto optimal and individually rational.\footnote{\rev{A reinsurance contract is called Pareto optimal when there is no other reinsurer contract that is better for both agents and strictly better for at least one agent. In actuarial science and optimal reinsurance contract theory, several notions of Pareto optimality have been studied by \cite{boonen2016,cai2017,asimit2018,lo2019}.}} \cite{asimit2018} show that Pareto optimality and individual rationality constitute reinsurance contracts where the indemnity function minimizes a particular sum of risk measures (while the corresponding premium is flexible and not uniquely determined). The asymmetric Nash bargaining solution then yields a particular choice of this premium.

We readily get that any Pareto-optimal reinsurance contract and the induced welfare gains strongly depend on the agents' risk preferences. It is then reasonable to assume that agents have motive to strategically\footnote{\rev{Following the standard terminology in game theory, we call an agent (or his behavior) \textit{strategic}, if his optimization criterion takes into account the fact that his actions impact the outcome of the transaction (equilibrium). In turn, the set in which the control variable belongs is called \textit{set of strategies or set of strategic choices.}. In our game, the set of strategic choices is the set within which the agents' risk aversion parameter lies.}} choose the risk preferences that will appear to have in the transaction and may state different preferences than their true ones. The rationale behind this argument is that reinsurance transaction is bilateral (i.e., similar to a duopoly) and hence both of the agents can heavily influence the transaction with their actions \rev{(in other words, both agents possess market power)}. This means that agents should take into account their ability to affect the transaction when they negotiate the reinsurance contract, whose most crucial part is apparently the agents' risk preferences. Based on this, we propose a game where each agent's set of strategies is the actual risk preferences that he will apply to the transaction, or in other words, we ask \rev{how} risk averse he is going to appear.

Our focus is on the constrained problem of the optimal reinsurance transaction, where the loss function and the premium result from a bilateral cooperative bargaining. We consider the family of comonotonic additive risk measures for both insurer and reinsurer \citep[as in][]{boonen2016}, and we use a parameterization in a way where the parameter can be seen as the level of risk-aversion. Consistent with this parameterization are the majority of the well-known risk measures.

Following the aforementioned argument, we define as agents' strategic set the interval of admissible risk-aversion parameters, allowing the agents to choose how much risk averse they appear in the risk-sharing. For the proposed game, we further assume that for any submitted risk aversion, the corresponding reinsurance contract will be determined by the optimal sharing rule (i.e., the one that minimizes the sum of risk measures). The justification of this setting is based on the socially optimal welfare gain, in the sense that the loss of utility caused by agents strategic behavior is minimized \citep[also used in similar models of][]{anthropelos2017, Anth17}. In fact, as it is shown in Section \ref{sec:game}, our proposed game does not decrease the total welfare, but rather affects the reinsurance premium and hence the gain's allocation.

\rev{Assuming that both agents apply the same aforementioned strategic behavior, we define as Nash equilibria the pairs of agents' submitted risk aversions at which none of the agents wants to deviate (i.e., a fixed point of agents' optimization problems).} It is shown (Theorem \ref{th:NE}) that within the set of Nash equilibria that result in a strict improvement compared to the status quo, agents' \rev{strategies} is to mimic the risk aversion of his counterparty. In other words, at the equilibrium both agents appear with the same risk aversion (even if their true risk preferences are different). It is also endogenously derived that the possibly different exogenous agents' bargaining power does not influence neither the equilibrium reinsurance contract nor the premium. This is an important feature of our model, since the determination of the agent who has more market power is not exogenously given, but rather it is a part of Nash equilibrium.

The set of Pareto optimal Nash equilibria is not a single-valued and there should be an additional criterion for the selection of a specific one. An idea is based on the so-called Stackelberg equilibrium argument. In this concept, one of the agent \rev{(called the leader)} first states his parameter, while the other \rev{(called the follower)} follows. We show in Theorem \ref{th:Stack} that the Stackelberg equilibrium is indeed a Nash equilibrium, and that all gains from reinsurance transaction (if any) go to the leader by mimicking the preferences of the follower (leaving the follower at \rev{an} indifferent level).

Nash equilibria have been recently studied within the concept of thin financial markets \citep[see, e.g.,][]{AnKarVic19,MalRos17,RosWer08,RosWer15}. Agents (i.e., traders) behave strategically when submitting their demand functions on a given vector of tradeable assets. For instance in \cite{AnKarVic19}, \rev{similar} to our model, traders' set of strategic choices is parameterized by risk aversion coefficient that appears in their demand function. On the other hand, strategies in risk-sharing transactions when the contract is endogenously derived in the equilibrium are recently studied by \cite{anthropelos2017} and \cite{Anth17}. The former considers exponential utility maximizers whose strategic set is the subjective probability measure that agents declare in the transaction; while in \cite{Anth17} each agent strategically chooses the risky portfolio he is willing to share with the other agent.

Stackelberg equilibria have been recently studied for the context of price competition in a duopoly by \cite{albrecher2017}, while \cite{chen2018} study them in dynamic differential games. The closest approaches to ours are the ones of \cite{chan1985} and \cite{cheung2019}, who study Bowley solutions. In Bowley solutions, the reinsurer serves as the leader. Then, the risk-neutral reinsurer states a Wang's premium principle in such a way that it anticipates the optimal reinsurance contract selected by the insurer given this premium principle. In our approach however, the total premium is determined by a Nash bargaining solution, which is a result of bargaining under the strategically chosen risk preferences.
\medskip

This paper is set out as follows. Section \ref{sec:model} states the bargaining model without strategic behavior. Section \ref{sec:game} introduces the game where the insurer and reinsurer are allowed to choose the risk preferences in the transaction, while Section \ref{sec:NE} provides a characterization of set of the Nash equilibria and the exact structure the form of the related Stackelberg equilibria. Finally, Section \ref{sec:conclusion} concludes.

\section{The Model}\label{sec:model}
We study the case of an insurer \rev{seeking} reinsurance. The insurer is endowed with initial risk (his insurance portfolio) $X$ that is realized at a given future reference time. We assume that $X$ is a random variable defined on a probability space $(\Omega,\mathcal{F}, \mathbb{P})$ and  $X\in L^\infty$, i.e.~$X$ belongs to the class of bounded random variables on $(\Omega,\mathcal{F}, \mathbb{P})$. As usually in the literature, we also assume that $X$ is non-negative and not a constant almost surely.

The insurer can cede part of the risk $X$ to the reinsurer. In particular, the insurer will cede $I(X)$ to the reinsurer, and the risk $X-I(X)$ is retained, where function $I$ determines the reinsurance payment. Naturally, to interpret $I(X)$ as an indemnity, it holds that $0\leq I(X)\leq X$ and also assume that $I\in\mathcal I$, where
\begin{align}
  \ \mathcal{I}:= \{I(\cdot): I(0)=0, \ 0\leqslant I(x)-I(y)\leqslant x-y, \ \forall \  0\leq y<x\}.
\end{align}
In other words, the loss functions $I(x)$ and $x-I(x)$ are increasing and any increment in compensation is always less than or equal to the increment in loss. Note that these properties of the ceded loss function $I$ are linked to the discouragement of  moral hazard \citep{denuit1998,young1999}. In fact, $I\in\mathcal I$ is equivalent to $I(0)=0$, $I$ is absolutely continuous, and $0\leq I'(x)\leq 1$ for all $x\geq 0$, almost everywhere \citep{zhuang2016}.

We index the insurer as agent 1 and the reinsurer as agent 2. The insurer and reinsurer are both endowed with risk measures that they aim to minimize. These are denoted by $\rho_1$ and $\rho_2$ and satisfy the following assumption.
\begin{assumption}\label{ass:1} For $i\in\{1,2\}$, the risk measure $\rho_i$ satisfies (see also the related discussion in \cite{boonen2016}):
\begin{itemize}\item monotonicity with respect to the order of $L^\infty$; \item $\rho_i(0)=0$ and $\rho_i(1)=1$;
\item comonotonic additivity: $\rho_i(Y)=\rho_i(I(Y))+\rho_i(Y-I(Y))$ for all $Y\in L^\infty$ and all $I\in \mathcal I$.\end{itemize}
\end{assumption}

Note that the normalization and comonotonic additivity imply that $\rho_i$ is cash-invariant, i.e., $\rho_i(Y+a)=\rho_i(Y)+a$ for every $Y\in L^\infty$ and $a\in\mathbb{R}$. Also, comonotonic additivity and monotonicity together with a regularity assumption on continuity imply the Choquet representation of
\cite{schmeidler1986}. If $\rho_i$ is also law-invariant\footnote{A risk measure $\rho_i$ is called law-invariant if for any $Y_1,Y_2\in L^\infty$ with $Y_1\stackrel{d}{=}Y_2$, then $\rho_i\big(Y_1\big)=\rho_i\big(Y_2\big)$.}, it can be represented as a distortion risk measure \citep{wang1997}. Risk measure $\rho_i$ is a distortion risk measure when:
\begin{equation*}\rho_i(Y)=\int_0^\infty g_i(S_Y(z))dz+\int^0_{-\infty}[ g_i(S_Y(z))-1]dz, \text{ for all } Y\in L^\infty,\end{equation*} where $S_Y(z):=1-F_Y(z)$ is the survival function of $Y$ and the distortion function $g_i$ is non-decreasing, $g_i(0)=0$, and $g_i(1)=1$.
Popular examples of such distortion risk measures are the Value-at-Risk (VaR) and the Conditional Value-at-Risk (CVaR).

In return for the coverage of the indemnity $I(X)$, the insurer pays a non-negative premium $\pi\geq0$ to the reinsurer. The insurer aims to minimize $\rho_1(X-I(X)+\pi)$, and the reinsurer aims to minimize $\rho_2(I(X)-\pi)$ for a reinsurance contract $(I,\pi)\in\mathcal I\times \mathbb{R}_+$.

As a first economic criterion, we propose individual rationality. A contract $(\hat{I},\hat{\pi})\in\mathcal I\times \mathbb{R}_+$ is called \textit{individually rational} if
\begin{align}\label{eq:IR1}\rho_1(X-\hat I(X)+\hat \pi)&\leq \rho_1(X),\\ \label{eq:IR2}\rho_2(\hat I(X)-\hat\pi)&\leq \rho_2(0)=0.\end{align}
The individual rationality \rev{is} obviously a minimal requirement (since if it does not hold then agents will not agree to the reinsurance transaction). The status quo situation (i.e., when agents do not proceed to any reinsurance) is given by $I(X)\stackrel{d}{=}0$ and $\pi=0$.

As a second economic criterion we impose the well-known \textit{Pareto optimality}. A contract $(\hat{I},\hat{\pi})$ is called Pareto optimal when there \rev{is} no other contract $(I,\pi)\in\mathcal I\times \mathbb{R}_+$ such that $\rho_1(X-I(X)+ \pi)\leq \rho_1(X-\hat{I}(X)+\hat{\pi})$ and $\rho_2( I(X)-\pi)\leq \rho_2(\hat{I}(X)-\hat{\pi})$, with one inequality strict.

According to Theorem 3.1 of \cite{asimit2018}, an individually rational reinsurance contract $(\hat{I},\hat{\pi})$ is Pareto optimal if and only if \rev{$\hat{I}$} solves
\begin{equation}\label{eq:I}\min_{I\in\mathcal I}\rho_1(X-I(X))+\rho_2(I(X)).\end{equation}

It is important to emphasize at this point that (due to the cash-invariance property) the determination of the Pareto-optimal contract does not include the premium. This means that there should be another criterion to determine the premium for a Pareto-optimal loss function $I$. The criterion that we impose for this is given by the \textit{asymmetric Nash bargaining} solution. A characterization of the asymmetric Nash bargaining is provided by \cite{kalai1977} and is similar as the Nash bargaining solution (originally introduced in \cite{nash1950}), but without a \emph{symmetry} axiom. More precisely, the set of asymmetric Nash bargaining solutions, with bargaining power $\delta\in(0,1)$ for reinsurer and $1-\delta$ for the insurer, is given by
\begin{align}\label{eq:ANBS}
\nonumber\max_{(I,\pi)\in\mathcal I\times\mathbb{R}_+} & (\rho_1(X)-\rho_1(X-I(X)+\pi))^{1-\delta}(-\rho_2(I(X)-\pi))^\delta,\\
\text{s.t.}\,\, &\rho_1(X)\geq\rho_1(X-I(X)+\pi),\\
\nonumber & 0 \geq\rho_2(I(X)-\pi),\end{align} which does not need to be single-valued.
Note that, as in the seminal paper of \cite{kalai1977}, the above characterization is for the convex problem. In general, a convex bargaining problem is determined by a convex and compact set $A \subset\mathbb{R}^2$ of feasible utility levels and a disagreement point $d\in\mathbb{R}^2$. In our case, the utility level is the negative of a risk measure and the disagreement point is the vector $d = (-\rho_1(X),0)$. Again by \citep{kalai1977} we have that the asymmetric Nash bargaining solution is individually rational and Pareto optimal, and thus the indemnity contract $I$ solves \eqref{eq:I}. Moreover, by \cite{boonen2016}  we have that any asymmetric Nash bargaining solution $(I,\pi)$ is such that $I$ solves \eqref{eq:I} and
\begin{equation}\label{e.price}
\pi=\rho_2(I(X))+\delta\Big(\rho_1(X)-\rho_1(X-I(X))-\rho_2(I(X))\Big).
\end{equation}
Thus, according to the third imposed criterion, for a given Pareto-optimal $I$, the associate premium is induced by \eqref{e.price}. Throughout the rest of this paper, we keep the parameter $\delta\in(0,1)$ fixed. Note that for all asymmetric Nash bargaining solutions $(I,\pi)$, the vector of risk measures $(\rho_1(X-I(X)+\pi), \rho_2(I(X)-\pi))$ remains the same.

We may identify $WG(I(X)):=\rho_1(X)-\rho_1(X-I(X))-\rho_2(I(X))$ as the \textit{total welfare gain} from trading. 
\rev{Since $I$ solves \eqref{eq:I} and $I(X)=0$ is an element of $\mathcal I$, it follows directly that $WG(I(X))\geq0$.}
The fraction $\delta$ of it is the welfare gain for the reinsurer, and the remaining faction $1-\delta$ of this total welfare gain is the welfare gain for the insurer. To formalize this, it holds for $\pi$ in \eqref{e.price} that
\begin{align*}
\rho_1(X)-\rho_1(X- I(X)+\pi)&=(1-\delta)\left(\rho_1(X)-\rho_1(X-I(X))-\rho_2(I(X))\right)=(1-\delta)WG(I(X)),\\
\rho_2(0)-\rho_2(I(X)-\pi)&=\delta\left(\rho_1(X)-\rho_1(X-I(X))-\rho_2(I(X))\right)=\delta WG(I(X)),\end{align*}
where $I\in\mathcal I$. Here, we use cash-invariance of $\rho_1$ and $\rho_2$. Note also that $\pi=\rho_2(I(X))+\delta WG(I(X))$, and when $\delta\rightarrow0$ or $\delta\rightarrow1$ then the indifference premiums of the reinsurer and insurer are charged, respectively.

\begin{remark}\label{r.delta}
It is apparent from the definition of the asymmetric Nash bargaining solution, that the parameter $\delta\in(0,1)$ denotes the exogenously given bargaining power of the agents. Since the Pareto-optimal loss function $I$ solves problem \eqref{eq:I}, the bargaining power affects only the level of the associated premium. In other words, we get directly from \eqref{e.price}, that higher reinsurer's bargaining power (i.e., the level of $\delta$) means higher premium for each optimal $I$. This is directly linked to the sharing of the welfare gain between the insurer and the reinsurer, which is based on the exogenously imposed parameter $\delta$.
\end{remark}

\section{Preferences as Strategic Choices}\label{sec:game}

In this section, we develop the argument on the proposed strategic behavior of the agents. As mentioned in the introductory section, the main idea is that agents do have motive to strategically choose the risk preferences (i.e., their risk measures) that they will submit to the transaction. As we have seen in Section \ref{sec:model}, both the reinsurance loss function $I$ and its premium $\pi$ heavily depends on the both agents' risk measures. In turn, the total welfare and its sharing between the agents hang upon the risk measures that each agent declares in the transaction. Furthermore, each of the agents possesses (possibly not symmetric) market power since the reinsurance contract can be seen as a specific duopoly zero-supply transaction. Therefore, it is reasonable to assume that both agents act strategically regarding the risk measure they appear to have when determining the reinsurance contract. In this way, a Nash game is formed and its equilibrium point will induce the effective risk measures and the associate reinsurance contract.

For any risk measure that agents declare, the reinsurance contract will be of the form of the asymmetric Nash bargaining solution, as defined in \eqref{eq:ANBS}. More precisely, we assume that whatever risk preferences are stated, the reinsurance loss function will be designed by the Pareto-optimal rule, while the induced premium \rev{will keep} the same bargaining power parameter as in \eqref{eq:premium}. In other words, agents pre-agree to share the insurer's risky portfolio in the way that optimizes the total welfare for each strategically stated risk measures. In terms of total welfare concerns, this is a reasonable framework, which in fact has been used in similar equilibrium models; see among others \cite{anthropelos2017, Anth17} and the references therein.

In short, we propose the following ``game'':
\begin{itemize}
\item [1.] the two agents state their risk measure $\rho^*_i$ (possibly different than $\rho_i$);
\item [2.] given the pair $(\rho_1^*,\rho^*_2)$, the asymmetric Nash bargaining solution selects an optimal reinsurance contract;
\item [3.] if this asymmetric Nash bargaining solution is not individually rational for at least one agent, there will be no reinsurance.
\end{itemize}

The next step is to parameterize set of admissible risk measures, that is to parameterize the agents' strategic sets. For this, we impose the following parameterization of $\rho_i$ for both $i\in\{1,2\}$:
\begin{equation*}\rho_i(Y)=\rho(Y;\gamma_i),\text{ for all }Y\in L^\infty.\end{equation*}
Throughout this paper, we assume that $(\gamma_1,\gamma_2)$ is the couple of \emph{true} parameters of the two agents. Therefore, the agents'(true) risk measures $\rho_1$ and $\rho_2$ belong to the same class of risk measures with possibly different values of parameters $\gamma_1$ and $\gamma_2$.

Imposing the above family of risk measures implies that agents are allowed to strategically choose the level of the parameter $\gamma_i$ by a value $\zeta_i$. We assume that the domain of $\zeta_i$, that is the agent's \textit{strategic set}, is a closed and finite interval, normalized to $[0,1]$.
In order to get a further structure on the risk measure parameterization, we impose the following properties on $\rho$.
\begin{assumption}\label{ass:M}For all $I\in\mathcal I$ such that $\mathbb{P}(I(X)>0)>0$, it holds that $\rho(I(X);\cdot)$ is continuous, and strictly increasing on $[0,1]$. Moreover, for all $\gamma\in[0,1]$, $\rho(\cdot;\gamma)$ satisfies Assumption \ref{ass:1}. \end{assumption}

Intuitively, increasing with respect to $\gamma$ implies that parameterization could stand for risk-aversion coefficient, in the sense that the agent's risk measure becomes more conservative as $\gamma$ increases. Besides that, the other good features of the chosen parameterization are its tractability and the fact that it is consistent with the most commonly-used examples of risk measures:
\begin{example}\label{ex:1} Let $\rho(\cdot;\gamma)$ be the distortion risk measure parameterized by distortion function $g(\cdot;\gamma)$. If $g(s;\gamma)$ is strictly increasing in $\gamma$ for all $s\in(0,1)$, then the distortion risk measure $\rho$ satisfies Assumption \ref{ass:M}. We state two different examples of such distortion risk measures. 
First, let $\rho$ be defined as:
\begin{align}\rho(Y;\gamma)&=  (1-\gamma) E[Y]+\gamma \hat\rho(X),\end{align}
where $\hat\rho$ is a distortion measure with a non-linear, concave distortion function $\hat g$. Thus, it holds that $\hat g(s)>s$ for all $s\in(0,1)$. Then, $\rho(\cdot;\gamma)$ is a distortion risk measure with distortion function \rev{$g(s;\gamma)=(1-\gamma)s+\gamma\hat g(s)$} for $s\in[0,1]$, which is strictly increasing in $\gamma$ on $(0,1)$. For instance, we could assume that $\hat\rho(Y)= CVaR_{\alpha}(Y)$ for some $\alpha\in(0,1)$, that is the distortion risk measure with distortion function $g(s)=\min\{s/(1-\alpha),1\}$ \citep[see, e.g.,][]{dhaene2006}.

The second example is the proportional hazard transform. This is again a distortion risk measure, where the corresponding distortion function can be scaled such that it satisfies Assumption \ref{ass:M}. For instance, the distortion function is $g(s;\gamma)=s^{1-\gamma}$ for $s\in[0,1]$ \citep{wang1995}.\end{example}

As stated before, for any couple of submitted parameters $(\zeta_1,\zeta_2)\in[0,1]^2$, that is for any submitted risk measures $(\rho(\cdot;\zeta_1),\rho(\cdot;\zeta_2))$, agents select the corresponding Pareto-optimal reinsurance contract. For any pair $(\zeta_1,\zeta_2)\in[0,1]^2$, we shall call a reinsurance contract $(I,\pi)\in\mathcal I\times\mathbb{R}_+$, $(\zeta_1,\zeta_2)$-Pareto optimal when $I$ solves
\begin{equation}\label{eq:I2}\min_{I\in\mathcal I}\rho(X-I(X);\zeta_1)+\rho(I(X);\zeta_2),\end{equation} and $\pi\geq0$ is chosen freely.

The following proposition characterizes the $(\zeta_1,\zeta_2)$-Pareto optimal reinsurance contracts within the proposed parameterization.
\begin{proposition}\label{prop:PO} Suppose that $\rho$ satisfies Assumption \ref{ass:M}. Then, $(I^*,\pi)$ is $(\zeta_1,\zeta_2)$-Pareto optimal if and only if it holds almost surely that
$$I^*(X)=\left\{\begin{array}{ll}X &\text{ if }\zeta_1>\zeta_2,\\ \hat I(X) &\text{ if }\zeta_1=\zeta_2,\\ 0 & \text{ if }\zeta_1<\zeta_2,\end{array}\right. $$
where $\hat I$ is any function in $\mathcal I$.\end{proposition}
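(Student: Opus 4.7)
My plan is to exploit comonotonic additivity to reduce the minimization in \eqref{eq:I2} to a one-term comparison in the parameter. The starting observation is that for every $I \in \mathcal I$, the map $x \mapsto x - I(x)$ also lies in $\mathcal I$, so both $I(X)$ and $X - I(X)$ are non-decreasing functions of $X$ and hence comonotonic. Applying Assumption~\ref{ass:1} to $\rho(\cdot;\zeta_1)$ and to $\rho(\cdot;\zeta_2)$ gives the two equivalent identities
\begin{align*}
\rho(X-I(X);\zeta_1) + \rho(I(X);\zeta_2)
&= \rho(X;\zeta_1) + \bigl[\rho(I(X);\zeta_2) - \rho(I(X);\zeta_1)\bigr] \\
&= \rho(X;\zeta_2) + \bigl[\rho(X-I(X);\zeta_1) - \rho(X-I(X);\zeta_2)\bigr].
\end{align*}
Because $\rho(X;\zeta_1)$ and $\rho(X;\zeta_2)$ are constants with respect to $I$, minimizing \eqref{eq:I2} is equivalent to minimizing either bracketed expression; the whole proof will then be a case split driven by the sign of $\zeta_1 - \zeta_2$.

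I would next handle each case in turn. For $\zeta_1 > \zeta_2$, I invoke the second identity: since $X - I(X) = J(X)$ for some $J \in \mathcal I$, the strict monotonicity in $\gamma$ from Assumption~\ref{ass:M} gives $\rho(J(X);\zeta_1) \geq \rho(J(X);\zeta_2)$, with equality precisely when $\mathbb P(J(X) > 0) = 0$, i.e.\ $I(X) = X$ a.s. For $\zeta_1 < \zeta_2$ I run the symmetric argument using the first identity to conclude that $\rho(I(X);\zeta_2) - \rho(I(X);\zeta_1) \geq 0$ vanishes only at $I(X) = 0$ a.s. For $\zeta_1 = \zeta_2$, both brackets vanish identically on $\mathcal I$, so every indemnity in $\mathcal I$ attains the minimum. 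The converse direction (every such $I^*$ solves \eqref{eq:I2}) is immediate by plugging into the same decomposition, and the premium $\pi \geq 0$ comes along for free since the definition of $(\zeta_1,\zeta_2)$-Pareto optimality places no further constraint on it.

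The main care point — which I would not call a genuine obstacle but which must be stated carefully — is that Assumption~\ref{ass:M} only provides \emph{strict} monotonicity of $\rho(Y;\cdot)$ when $\mathbb P(Y > 0) > 0$, so the ``equality iff'' step in each case has to handle the degenerate indemnities $I(X) = 0$ and $I(X) = X$ separately, using the normalization $\rho(0;\gamma) = 0$. Once that is in place, the argument relies only on comonotonic additivity, monotonicity in the parameter, and the closure of $\mathcal I$ under $I \mapsto \mathrm{id} - I$, so no further technical input is needed.
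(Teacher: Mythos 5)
Your proposal is correct and takes essentially the same route as the paper: both rest on comonotonic additivity at a single parameter value (to rewrite the objective as a constant plus a nonnegative bracketed term) and on the monotonicity of $\gamma\mapsto\rho(\cdot;\gamma)$ to characterize when that bracket vanishes, together with the normalization $\rho(0;\gamma)=0$ to handle the degenerate indemnities. Your presentation is a slightly more symmetric packaging of the same algebra — the paper writes the $\zeta_1>\zeta_2$ case as the chain $\rho(X-I(X);\zeta_1)+\rho(I(X);\zeta_2)>\rho(X-I(X);\zeta_2)+\rho(I(X);\zeta_2)=\rho(X;\zeta_2)$, which is exactly your second identity rearranged — so there is no substantive difference in approach.
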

\begin{proof} First, assume that $\zeta_1>\zeta_2$. For all $I\in\mathcal I$ such that $\mathbb{P}(I(X)< X)>0$, we get
\begin{equation*}\rho(X-I(X);\zeta_1)+\rho(I(X);\zeta_2)>\rho(X-I(X);\zeta_2)+\rho(I(X);\zeta_2)=\rho(X;\zeta_2),\end{equation*} where the inequality is due to monotonicity of $\rho(X-I(X);\cdot)$, and the equality is due to comonotonic additivity of $\rho$. On the other hand, if $I(X)=X$ almost surely, then
\begin{align*}\rho(X-I(X);\zeta_1)+\rho(I(X);\zeta_2)=
\rho(X;\zeta_2),\end{align*}since $\rho(0;\zeta_1)=0$. Hence, the minimum in \eqref{eq:I2} is attained if and only if $I(X)=X$ almost surely.

The proof for the case that $\zeta_1<\zeta_2$ is similar. Finally, the result for the case where $\zeta_1=\zeta_2$ is a direct consequence of comonotonic additivity of $\rho(\cdot;\zeta_1)$.
\end{proof}

It follows that under risk measures of the same family, if the reinsurer appears in the transaction as relatively more risk averse, then there is no transfer of risk. On the other hand, if reinsurer behaves less risk averse than the insurer, there is a total risk transfer to the reinsurer. If they declare the same risk aversion, then, for every $I\in\mathcal{I}$, the quantity of problem \eqref{eq:I2} stays the same and equal to $\rho(X;\zeta_1)$.

Regarding the premium, if agents' submitted risk aversions are $(\zeta_1,\zeta_2)$, the asymmetric Nash bargaining solution induces the following premium (see and compare with \eqref{e.price}):
\begin{align}
\pi&=\rho(I^*(X);\zeta_2)+\delta\Big(\rho(X;\zeta_1)-\rho(X-I^*(X);\zeta_1)-\rho(I^*(X);\zeta_2)\Big)\nonumber\\
&=\delta[\rho(X;\zeta_1)-\rho(X-I^*(X);\zeta_1)]+(1-\delta)\rho(I^*(X);\zeta_2),\label{eq:premium}\end{align} for reinsuring the risk $I^*(X)$, where $I^*\in\mathcal I$ is as in Proposition \ref{prop:PO}. For a given pair $(\zeta_1,\zeta_2)$, we refer to $(I^*,\pi)$, given in Proposition \ref{prop:PO} and \eqref{eq:premium}, as the asymmetric Nash bargaining solution. If $\zeta_1<\zeta_2$, then $I^*(X)\stackrel{d}{=}0$ and $\pi=0$, and if $\zeta_1>\zeta_2$, then $I^*(X)\stackrel{d}{=}X$ and $\pi$ given by \eqref{eq:premium}.

The risky position after reinsurance (\emph{posterior} risk) for each agent will then be given by
\begin{align*}
Y_1&=X-I^*(X)+\delta[\rho(X;\zeta_1)-\rho(X-I^*(X);\zeta_1)]+(1-\delta)\rho(I^*(X);\zeta_2),\\ Y_2&=I^*(X)-\delta[\rho(X;\zeta_1)-\rho(X-I^*(X);\zeta_1)]-(1-\delta)\rho(I^*(X);\zeta_2).\end{align*}
So, after the reinsurance contract $(I^*,\pi)$ is traded, the insurer is endowed with $Y_1$ and the reinsurer is endowed with $Y_2$.  Note that the posterior risk is evaluated by the agents using the \emph{true} preferences, that is the measures $\rho(\cdot;\gamma_i)$ are applied.

In other words, for any pair $(\zeta_1,\zeta_2)$, provided that $\zeta_1\neq\zeta_2$, Proposition \ref{prop:PO} and pricing rule \eqref{eq:premium} determine a unique reinsurance contract. We now clarify the case when $\zeta_1=\zeta_2$, which will be proven to be quite interesting in our set-up. Note that when risk aversions are equal, the comonotonic additivity implies that the welfare gains are zero. Proposition \ref{prop:PO} states that any $\hat I(X)$ with $\hat I\in\mathcal I$ yields $(\zeta_1,\zeta_2)$-Pareto optimality. We differentiate two cases. Firstly, we let\footnote{Here, we say that if $\gamma_1< \gamma_2$, then $[\gamma_2,\gamma_1]=\emptyset$.} $\zeta_1=\zeta_2\in[\gamma_2,\gamma_1]$, which thus implies $\gamma_1\geq \gamma_2$.
By monotonicity of $\rho(X-I(X);\cdot)$ and comonotonic additivity of $\rho(\cdot;\zeta_1)$, it holds for all $\hat I\in\mathcal I$ and $\zeta_1=\zeta_2\leq\gamma_1$ that:
\begin{align*}
\rho(Y_1;\gamma_1)&=\rho(X-\hat I(X);\gamma_1)+\delta[\rho(X;\zeta_1)-\rho(X-\hat I(X);\zeta_1)]+(1-\delta)\rho(\hat I(X);\zeta_2)\\&= \rho(X-\hat I(X);\gamma_1)+\delta(\rho(\hat I(X);\zeta_1)+(1-\delta)\rho(\hat I(X);\zeta_2)\\&=\rho(X-\hat I(X);\gamma_1)+\rho(\hat I(X);\zeta_1)\\&\geq\rho(X-\hat I(X);\zeta_1)+\rho(\hat I(X);\zeta_1)\\&=\rho(X;\zeta_1),
\end{align*}
and if $\hat I\in\mathcal I$ and $\zeta_1=\zeta_2\geq\gamma_2$ then
 \begin{align*}\rho(Y_2;\gamma_2)&=\rho(\hat I(X);\gamma_2)-\delta[\rho(X;\zeta_1)-\rho(X-\hat I(X);\zeta_1)]-(1-\delta)\rho(\hat I(X);\zeta_2)\\&=\rho(\hat I(X);\gamma_2)-\rho(\hat I(X);\zeta_2))\\&=\rho(X;\gamma_2)-\rho(X;\zeta_2))-[\rho(X-\hat I(X);\gamma_2)-\rho(X-\hat I(X);\zeta_2))] \\
 &\geq\rho(X;\gamma_2)-\rho(X;\zeta_2).
 \end{align*}
Hence, if $\zeta_1=\zeta_2$ and  also $\gamma_2\leq\zeta_i\leq\gamma_1$, it is optimal for both agents to select the indemnity $\hat I(X)=X$, which means that $Y_1=\rho(X;\zeta_1)$ and also yields the premium $\pi=\rho(X;\zeta_1)$ from \eqref{eq:premium}.

Secondly, we check the case where $\zeta_1=\zeta_2\notin[\gamma_2,\gamma_1]$, where it is easily shown that at least one agent is strictly better off in case $\hat I(X)=0$ (no risk-sharing), which yields the premium $\pi=0$ from \eqref{eq:premium}. Summing up, if $\zeta_1=\zeta_2$, the asymmetric Nash bargaining solution is not single-valued, and then it is optimal to then select $\hat I(X)=X$ if $\zeta_1=\zeta_2\in[\gamma_2,\gamma_1]$ and $\hat I(X)=0$ if $\zeta_1=\zeta_2\notin[\gamma_2,\gamma_1]$. Hence, we pose the following definition of a single-valued solution that always selects an asymmetric Nash bargaining solution.
\begin{definition}\label{c:same_zeta} Suppose that $\rho$ satisfies Assumption \ref{ass:M}.
For any pair $(\zeta_1,\zeta_2)\in[0,1]^2$, the reinsurance contract $(I^*,\pi)$ is such that
\begin{equation}
I^*(X)=\left\{\begin{array}{ll}X &\text{ if }\zeta_1>\zeta_2\text{ or }\zeta_1=\zeta_2\in[\gamma_2,\gamma_1],\\ 0 & \text{ otherwise,}\end{array}\right. \label{eq:I*}\end{equation} and $\pi$ as in \eqref{eq:premium}.
\end{definition}

For $(I^*,\pi)$ as in Definition \ref{c:same_zeta}, welfare gains of the insurer and the reinsurer are:
\begin{align*}\hat{WG}_1(\zeta_1,\zeta_2)&:=\rho(X;\gamma_1)-\rho(Y_1;\gamma_1)\\&=\rho(X;\gamma_1)-
\rho(X-I^*(X);\gamma_1)-\rho(I^*(X);\zeta_2)\\&\,\,\,\,-\delta\Big(\rho(X;\zeta_1)-\rho(X-I^*(X);\zeta_1)-\rho(I^*(X);\zeta_2)\Big),\\
\hat{WG}_2(\zeta_1,\zeta_2)&:=\rho(0;\gamma_2)-\rho(Y_2;\gamma_2)\\&=
-\rho(I^*(X);\gamma_2)+\rho(I^*(X);\zeta_2)+\delta\Big(\rho(X;\zeta_1)-\rho(X-I^*(X);\zeta_1)-\rho(I^*(X);\zeta_2)\Big).\end{align*}
Substituting $I^*$ from \eqref{eq:I*} yields:
\begin{align*}\hat{WG}_1(\zeta_1,\zeta_2)&=\left\{\begin{array}{ll}\rho(X;\gamma_1)- (1-\delta)
\rho(X;\zeta_2)-\delta\rho(X;\zeta_1)\,\,&\text{ if }\zeta_1>\zeta_2\text{ or }\zeta_1=\zeta_2\in[\gamma_2,\gamma_1],\\ 0 &\text{ otherwise,}\end{array}\right. \\
\hat{WG}_2(\zeta_1,\zeta_2)&=\left\{\begin{array}{ll}-\rho(X;\gamma_2)+(1-\delta)
\rho(X;\zeta_2)+\delta\rho(X;\zeta_1)&\text{ if }\zeta_1>\zeta_2\text{ or }\zeta_1=\zeta_2\in[\gamma_2,\gamma_1],\\ 0 &\text{ otherwise.}\end{array}\right.
\end{align*}
The total welfare gains from trading are then given by
\begin{align}\hat{WG}_1(\zeta_1,\zeta_2)+\hat{WG}_2(\zeta_1,\zeta_2)&=\rho(X;\gamma_1)-\rho(X-I^*(X);\gamma_1)-\rho(I^*(X);\gamma_2),\label{eq:totWG}\end{align} where $I^*$ is given in \eqref{eq:I*}. The right hand side of \eqref{eq:totWG} depends on $(\zeta_1,\zeta_2)$ only via $I^*$. Note that \eqref{eq:totWG} is equal to $\rho(X;\gamma_1)-\min_{I\in\mathcal I}\{\rho(X-I(X);\gamma_1)+\rho(I(X);\gamma_2)\}$ when $(\zeta_1,\zeta_2)$ is ordered in the same way as $(\gamma_1,\gamma_2)$. Moreover, if $\gamma_1>\gamma_2$ and $\zeta_1>\zeta_2$ or $\zeta_1=\zeta_2\in[\gamma_2,\gamma_1]$, then the value of  \eqref{eq:totWG} is strictly positive.

In the third and last step of the game, the agents decide whether or not to accept the proposed deal. Including this third step via individual rationality constraints leads to the following welfare gains after bargaining:
\begin{align}WG_i(\zeta_1,\zeta_2)&:=\left\{\begin{array}{ll}\hat{WG}_i(\zeta_1,\zeta_2)&\text{ if }\hat{WG}_1(\zeta_1,\zeta_2)\geq0 \text{ and }\hat{WG}_2(\zeta_1,\zeta_2)\geq0,\\
0&\text{ otherwise,}\end{array}\right. \label{eq:WGi}\end{align}for all $i\in\{1,2\}$.

We next provide an example of the functions $WG_1$ and $WG_2$.

\begin{example}\label{ex:2} Let the preferences be given by $\rho(Y;\gamma_i)=(1-\gamma_i) E[Y]+\gamma_iCVaR_{99\%}(Y)$, $\delta=4/5$, and assume that $X$ is exponentially distributed with parameter 1. Then, we readily get $\rho(X;\gamma_i)= (1-\gamma_i)\cdot 1+\gamma_i(1+F_X^{-1}(99\%))=1+\gamma_i\ln(100)$. So, $\rho(X;\cdot)$ is an affine function. Assume $2/3=\gamma_1> \gamma_2=1/3$. We find that $WG(X)=\rho(X;\gamma_1)-\rho(X;\gamma_2)=\ln(100)/3$. We display the joint strategies where the functions $WG_1$ and $WG_2$ are not both zero in Figure \ref{fig:WG}. In that case, we find
\begin{align*}WG_1(\zeta_1,\zeta_2)&=(\gamma_1-(1-\delta)\zeta_2-\delta\zeta_1)\ln(100)=(\tfrac23-\tfrac15\zeta_2-\tfrac45\zeta_1)\ln(100),\\
WG_2(\zeta_1,\zeta_2)&=(-\gamma_2+(1-\delta)\zeta_2+\delta\zeta_1)\ln(100)=(-\tfrac13+\tfrac15\zeta_2+\tfrac45\zeta_1)\ln(100),
\end{align*} where $(\zeta_1,\zeta_2)\in[0,1]^2$ is in the shaded area in Figure \ref{fig:WG}.
\begin{figure}[h!]\centering
\begin{pspicture}(0,-1)(8,7)
\psline(1,0)(1,6)
\psline(1,6)(7,6)
\psline(7,6)(7,0)\psline(7,0)(1,0)
\uput[0](1,5){$WG_1(\zeta_1,\zeta_2)=WG_2(\zeta_1,\zeta_2)=0$}
\uput[180]{90}(0,3){$\zeta_2\rightarrow$}\uput[-90](4,-.3){$\zeta_1\rightarrow$}
\pspolygon[fillstyle=hlines](3,2)(5,4)(6,0)(3.5,0)
\psline(0.7,0)(1,0)\psline(1,-.3)(1,0)\uput[180](.7,-0.5){0}\psline(0.7,2)(1,2)\uput[180](.7,2){$\gamma_2$}
\psline(0.7,4)(1,4)\uput[180](.7,4){$\gamma_1$}

\psline(0.7,6)(1,6)\uput[180](.7,6){1}
\psline(3,-0.3)(3,0)\uput[-90](3,-.25){$\gamma_2$}\psline(5,-0.3)(5,0)\uput[-90](5,-.25){$\gamma_1$}
\psline(7,-0.3)(7,0)\uput[-90](7,-.25){1}
\end{pspicture}
\caption{Graphical illustration of $WG_i$ as function of $(\zeta_1,\zeta_2)\in[0,1]^2$, corresponding to Example \ref{ex:2}. The functional form is shown in Example \ref{ex:2}. Only in the shaded area, it holds that $WG_i(\zeta_1,\zeta_2)\neq0$ for at least one $i\in\{1,2\}$. Moreover, on this shaded area, $WG_1(\zeta_1,\zeta_2)$ is strictly decreasing in $\zeta_1$ and $\zeta_2$, while $WG_2(\zeta_1,\zeta_2)$ is strictly increasing in $\zeta_1$ and $\zeta_2$. }
\label{fig:WG}\end{figure}
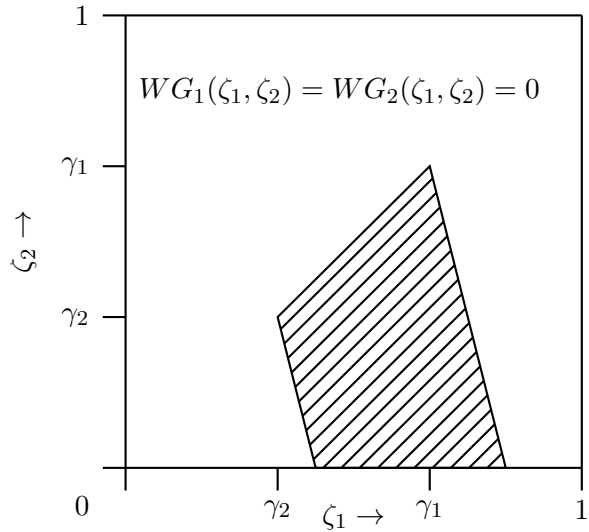
\end{example}

\section{Nash Equilibrium}\label{sec:NE}
Having defined the agents' gains for each pair of strategic choices $(\zeta_1,\zeta_2)$, we are in the position to formulate the associate Nash equilibrium.
\begin{definition} A Nash equilibrium is defined as a pair $(\gamma_1^*,\gamma^*_2)\in[0,1]^2$ such that
\begin{align*}\gamma_1^*&\in \argmax_{\zeta_1\in[0,1]}WG_1(\zeta_1,\gamma^*_2),\\ \gamma^*_2&\in \argmax_{\zeta_2\in[0,1]} WG_2(\gamma^*_1,\zeta_2),\end{align*}
where $WG_1$ and $WG_2$ are defined in \eqref{eq:WGi}.\end{definition}

As stated above, for a Nash equilibrium pair $(\gamma_1^*,\gamma^*_2)\in[0,1]^2$, the induced equilibrium reinsurance contract is given by \eqref{eq:I*} and \eqref{eq:premium}.

Proposition \ref{prop:PO} also implies that under true risk preferences (i.e., when agents do not act strategically), if $\gamma_1<\gamma_2$, then the Pareto-optimal contract $(I^*(X),\pi)=(0,0)$. So, there does not exist a contract in which both the insurer and reinsurer strictly gain more than the status quo. In other words, when the reinsurer's true risk aversion is higher than the risk aversion of the insurer, there is no mutually agreeable transfer of risk. The same holds even when agents strategically choose their risk aversion parameter, as shown in the following result.

\begin{proposition}\label{prop:simpleNE}Suppose that $\rho$ satisfies Assumption \ref{ass:M}. If $\gamma_1\leq \gamma_2$, then any $(\gamma_1^*,\gamma_2^*)\in[0,1]^2$ is a Nash equilibrium.\end{proposition}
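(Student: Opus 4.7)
The plan is to argue that when $\gamma_1\leq\gamma_2$ the welfare gain functions $WG_1$ and $WG_2$ are identically zero on $[0,1]^2$, so that every pair $(\gamma_1^*,\gamma_2^*)$ trivially maximizes both and is therefore a Nash equilibrium.

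First, I would compute the sum $\hat{WG}_1(\zeta_1,\zeta_2)+\hat{WG}_2(\zeta_1,\zeta_2)$ from the explicit formulas given just before Definition of $WG_i$. This sum equals $\rho(X;\gamma_1)-\rho(X;\gamma_2)$ on the region where the gains are nonzero, and equals $0$ elsewhere (this is essentially \eqref{eq:totWG} specialized via \eqref{eq:I*}). Since $\rho(X;\cdot)$ is strictly increasing on $[0,1]$ by Assumption \ref{ass:M}, the hypothesis $\gamma_1\leq\gamma_2$ implies $\rho(X;\gamma_1)-\rho(X;\gamma_2)\leq 0$, so $\hat{WG}_1+\hat{WG}_2\leq 0$ pointwise.

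Next I would split into two cases. If $\gamma_1<\gamma_2$, then the sum is strictly negative whenever the $\hat{WG}_i$ are nonzero (i.e.\ on the region $\zeta_1>\zeta_2$, noting that the convention $[\gamma_2,\gamma_1]=\emptyset$ in the footnote rules out the equality subcase). Hence at least one of $\hat{WG}_1,\hat{WG}_2$ is strictly negative there, so by the definition \eqref{eq:WGi} of $WG_i$, the individual rationality filter kicks $WG_i$ down to $0$. If $\gamma_1=\gamma_2$, the sum is exactly $0$ on the nonzero region, so requiring both $\hat{WG}_i\geq 0$ forces both to equal $0$; again $WG_i\equiv 0$. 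Combining the two cases, $WG_i(\zeta_1,\zeta_2)=0$ for all $(\zeta_1,\zeta_2)\in[0,1]^2$ and $i\in\{1,2\}$.

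Finally, since both $\zeta_1\mapsto WG_1(\zeta_1,\gamma_2^*)$ and $\zeta_2\mapsto WG_2(\gamma_1^*,\zeta_2)$ are the zero function, any $\gamma_1^*\in[0,1]$ lies in the first $\argmax$ and any $\gamma_2^*\in[0,1]$ lies in the second, so every pair in $[0,1]^2$ satisfies the Nash equilibrium condition. The only (mild) subtlety is handling the $\zeta_1=\zeta_2$ boundary correctly via Definition \ref{c:same_zeta}, which is why I would explicitly invoke the footnote convention $[\gamma_2,\gamma_1]=\emptyset$ when $\gamma_1<\gamma_2$ and treat the equality case separately; no deeper obstacle arises.
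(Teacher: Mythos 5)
Your proof is correct and follows essentially the same route as the paper's: both hinge on observing that $\hat{WG}_1+\hat{WG}_2=\rho(X;\gamma_1)-\rho(X;\gamma_2)\leq 0$ on the region where $I^*(X)=X$ (and equals zero elsewhere), and then using the individual-rationality filter in \eqref{eq:WGi} to force $WG_1\equiv WG_2\equiv 0$, so that every pair trivially lies in both $\argmax$ sets. The paper handles the cases $\gamma_1<\gamma_2$ and $\gamma_1=\gamma_2$ in a single stroke by noting ``$\hat{WG}_1>0$ implies $\hat{WG}_2<0$,'' whereas you split them explicitly; this is a harmless expository difference, not a different argument.
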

\begin{proof}Let $\gamma_1\leq \gamma_2$. If $\gamma_1^*>\gamma^*_2$ or $\gamma^*_1=\gamma^*_2\in[\gamma_2,\gamma_1]$, we get \begin{equation*}\hat{WG}_1(\gamma^*_1,\gamma^*_2)+\hat{WG}_2(\gamma^*_1,\gamma^*_2)=\rho(X;\gamma_1)-\rho(X;\gamma_2)\leq0,\end{equation*} and otherwise $\hat{WG}_1(\gamma^*_1,\gamma^*_2)+\hat{WG}_2(\gamma^*_1,\gamma^*_2)=0+0=0$. Hence, for all $(\gamma^*_1,\gamma^*_2)\in[0,1]^2$ it holds that $\hat{WG}_1(\gamma^*_1,\gamma^*_2)>0$ implies $\hat{WG}_2(\gamma^*_1,\gamma^*_2)<0$, and so we have $WG_1(\gamma^*_1,\gamma^*_2)=0$. Thus, for any $\gamma^*_2\in[0,1]$, it holds that $WG_1(\gamma^*_1,\gamma^*_2)=0$. So, $ \argmax_{\zeta_1\in[0,1]}WG_1(\zeta_1,\gamma^*_2)=[0,1]$. Likewise,  it holds $\argmax_{\zeta_2\in [0,1]}WG_2(\gamma^*_1,\zeta_2)=[0,1]$ for all $\gamma^*_1\in[0,1]$.
Hence, for all $(\gamma^*_1,\gamma^*_2)\in[0,1]^2$, it holds $\gamma^*_1\in \argmax_{\zeta_1\in [0,1]}WG_1(\zeta_1,\gamma^*_2)$ and $\gamma^*_2\in \argmax_{\zeta_2\in [0,1]}WG_2(\gamma^*_1,\zeta_2)$. Hence, any $(\gamma^*_1,\gamma^*_2)\in[0,1]^2$ is a Nash equilibrium.
\end{proof}
Hence, if $\gamma_1\leq \gamma_2$, then the Nash equilibrium leads to a contract where both agents do not improve their position compared to the status quo. In other words, it holds for the Nash equilibrium $(\gamma^*_1,\gamma^*_2)\in[0,1]^2$ that $WG_1(\gamma^*_1,\gamma^*_2)=WG_2(\gamma^*_1,\gamma^*_2)=0$ and a corresponding reinsurance contract is given by $I(X)\stackrel{d}{=}0$ and $\pi=0$: no reinsurance (which is actually the case even without strategic behavior).

Therefore, in the sequel of this paper, we only focus on the case where $\gamma_1> \gamma_2$. 
We first define the value\footnote{We say that $\inf\{\emptyset\}=\infty$ and $\sup\{\emptyset\}=-\infty$.} \begin{equation*}\Gamma_1:=\inf\{\zeta_1\in[0,1]:\hat{WG}_1(\zeta,0)<0\},\end{equation*} which stands for the threshold so that if $\zeta_1>\Gamma_1$, then the insurer will never accept the reinsurance contract, irrespective of the reinsurer's strategy. We then, for any $\zeta_1\in(\gamma_1,\Gamma_1]$, define the function
\begin{align*}f_2(\zeta_1)&=\max\{\zeta_2\in[0,\zeta_1]:\hat{WG}_1(\zeta_1,\zeta_2)\geq0\}\\ &=\max\{\zeta_2\in[0,\zeta_1]:\rho(X;\gamma_1)- (1-\delta)
\rho(X;\zeta_2)-\delta\rho(X;\zeta_1)\geq 0\}.\end{align*}
Note that $f_2$ is the response function on the domain $(\gamma_1,\Gamma_1]$ that sets the insurer at indifference, i.e., $\hat{WG}_1(\zeta_1,f_2(\zeta_1))=0$. This follows from  the fact that $\hat{WG}_1(\zeta_1,\cdot)$ is strictly decreasing and continuous, $\hat{WG}_1(\zeta_1,0)\geq0$ for $\gamma_1^*\leq\Gamma_1$, and $\hat{WG}_1(\zeta_1,\zeta_1)<0$ since $\zeta_1>\gamma_1$.
Since $\rho(X;\cdot)$ is assumed monotone, it holds that $f_2$ is a strictly decreasing function on the domain $(\gamma_1,\Gamma_1]$. Symmetrically, we define
\begin{align*}f_1(\zeta_2)&=\min\{\zeta_1\in[\zeta_2,1]:\hat{WG}_2(\zeta_1,\zeta_2)\geq0\},\end{align*} for all $\zeta_2\in[\Gamma_2,\gamma_2)$, where $\Gamma_2:=\sup\{\zeta_2\in[0,1]:\hat{WG}_2(1,\zeta_2)<0\}$.

The following lemma allows \rev{us} to clarify the values of the agents' best-response functions.

\begin{lemma}\label{lem:BR1} Let $\gamma_1> \gamma_2$, and $\rho$ satisfy Assumption \ref{ass:M}. Then, for all $\zeta_1\in[0,1]$, it holds that
\begin{align}\argmax_{\zeta'_2\in[0,1]}WG_2(\zeta_1,\zeta_2')=\left\{\begin{array}{ll} {[0,1]} &\text{ if }\zeta_1\leq\gamma_2,\\ \zeta_1 &\text{ if }\zeta_1\in(\gamma_2,\gamma_1],\\ f_2(\zeta_1) &\text{ if }\zeta_1\in(\gamma_1,\Gamma_1],\\ {[0,1]} &\text{ if } \zeta_1 > \Gamma_1,\end{array}\right.\label{eq:br2}\end{align} and for all $\zeta_2\in[0,1]$, it holds that
\begin{align}\argmax_{\zeta_1'\in[0,1]}WG_1(\zeta'_1,\zeta_2)=\left\{\begin{array}{ll}{[0,1]} &\text{ if } \zeta_2 < \Gamma_2, \\ f_1(\zeta_2) &\text{ if }\zeta_2\in[\Gamma_2,\gamma_2],\\ \zeta_2 &\text{ if }\zeta_2\in[\gamma_2,\gamma_1),\\ {[0,1]} &\text{ if }\zeta_2\geq\gamma_1.\end{array}\right. \label{eq:br1}\end{align}
\end{lemma}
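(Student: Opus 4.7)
My plan is to fix $\zeta_1$ and characterise $\argmax_{\zeta_2\in[0,1]}WG_2(\zeta_1,\zeta_2)$ by case analysis, then obtain the symmetric statement for agent~$1$ by interchanging roles. Two structural facts drive everything. First, on the \emph{trade region} $\mathcal{T}:=\{(\zeta_1,\zeta_2)\in[0,1]^2:\zeta_1>\zeta_2\text{ or }\zeta_1=\zeta_2\in[\gamma_2,\gamma_1]\}$ (where $I^*(X)=X$), the identity $\hat{WG}_1(\zeta_1,\zeta_2)+\hat{WG}_2(\zeta_1,\zeta_2)\equiv\rho(X;\gamma_1)-\rho(X;\gamma_2)=:\Delta>0$ holds, while both $\hat{WG}_i$ vanish off $\mathcal{T}$. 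Second, Assumption~\ref{ass:M} makes $\rho(X;\cdot)$ continuous and strictly increasing, so on $\mathcal{T}$ the map $\zeta_2\mapsto\hat{WG}_2(\zeta_1,\zeta_2)$ is continuous and strictly increasing, and $\zeta_2\mapsto\hat{WG}_1(\zeta_1,\zeta_2)$ is continuous and strictly decreasing. Since $WG_2=\hat{WG}_2$ precisely when both $\hat{WG}_i\ge 0$ (and $WG_2=0$ otherwise), the reinsurer's problem is a monotone programme: push $\zeta_2$ up to the largest value at which both membership in $\mathcal{T}$ and the constraint $\hat{WG}_1\ge 0$ still hold.

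The four subcases in~\eqref{eq:br2} correspond to where this programme binds. If $\zeta_1\le\gamma_2$, the identity bounds $\hat{WG}_2\le\rho(X;\zeta_1)-\rho(X;\gamma_2)\le 0$ throughout $\mathcal{T}$, so $WG_2\equiv 0$ and every $\zeta_2$ is optimal. If $\zeta_1\in(\gamma_2,\gamma_1]$, the diagonal point $\zeta_2=\zeta_1$ belongs to $\mathcal{T}$ and gives $\hat{WG}_1=\rho(X;\gamma_1)-\rho(X;\zeta_1)\ge 0$, so by monotonicity $\hat{WG}_1\ge 0$ throughout $\mathcal{T}$; the constraint is slack and the reinsurer pushes $\zeta_2$ up to $\zeta_2=\zeta_1$, with value $\rho(X;\zeta_1)-\rho(X;\gamma_2)>0$, so $\argmax=\{\zeta_1\}$. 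If $\zeta_1\in(\gamma_1,\Gamma_1]$, then by the definition of $\Gamma_1$ one has $\hat{WG}_1(\zeta_1,0)\ge 0$, while the limit of $\hat{WG}_1$ as $\zeta_2\nearrow\zeta_1$ equals $\rho(X;\gamma_1)-\rho(X;\zeta_1)<0$; continuity and strict monotonicity then yield a unique $f_2(\zeta_1)\in[0,\zeta_1)$ with $\hat{WG}_1(\zeta_1,f_2(\zeta_1))=0$, and the identity forces $\hat{WG}_2(\zeta_1,f_2(\zeta_1))=\Delta>0$, so this endpoint is the unique argmax. Finally, for $\zeta_1>\Gamma_1$ the point $(\zeta_1,0)$ lies in $\mathcal{T}$ with $\hat{WG}_1(\zeta_1,0)<0$, so monotonicity gives $\hat{WG}_1<0$ on all of $\mathcal{T}$, whence $WG_2\equiv 0$ and every $\zeta_2$ is again optimal.

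The companion statement~\eqref{eq:br1} is obtained by exactly the same argument with the roles of the two agents swapped: on $\mathcal{T}$, $\hat{WG}_1$ is continuous and strictly decreasing in $\zeta_1$ and $\hat{WG}_2$ is continuous and strictly increasing in $\zeta_1$, so agent~$1$ wants to push $\zeta_1$ down to the smallest value consistent with $\hat{WG}_2\ge 0$. The four intervals $[0,\Gamma_2)$, $[\Gamma_2,\gamma_2]$, $[\gamma_2,\gamma_1)$, $[\gamma_1,1]$ of $\zeta_2$ then yield the four cases of~\eqref{eq:br1} by the identical binding/non-binding analysis. I expect the main technical care to lie in bookkeeping the piecewise behaviour at the diagonal points $\zeta_1=\zeta_2\in\{\gamma_2,\gamma_1\}$ and in arguing continuity of $\hat{WG}_i$ at the thresholds $\Gamma_1$ and $\Gamma_2$ (so that the defining strict inequalities are in fact equalities at the boundary), but no conceptual obstacle arises beyond the two structural facts above.
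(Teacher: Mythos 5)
Your proof is correct and follows essentially the same case-by-case route as the paper's: both split $\zeta_1$ into the four intervals $[0,\gamma_2]$, $(\gamma_2,\gamma_1]$, $(\gamma_1,\Gamma_1]$, $(\Gamma_1,1]$, and in each interval exploit that on the trade region $\hat{WG}_2(\zeta_1,\cdot)$ is continuous and strictly increasing while $\hat{WG}_1(\zeta_1,\cdot)$ is continuous and strictly decreasing, together with the definitions of $\Gamma_1$ and $f_2$. The one stylistic difference is that you foreground the conservation identity $\hat{WG}_1+\hat{WG}_2=\rho(X;\gamma_1)-\rho(X;\gamma_2)$ as an organizing principle, whereas the paper works directly from the two explicit formulas; this is cosmetic, not a different method.
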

\begin{proof} We only prove \eqref{eq:br2}, as the proof of \eqref{eq:br1} is similar and thus omitted.
The risk measure $\rho(X;\zeta_2)$ is increasing in $\zeta_2$ by Assumption \ref{ass:M}.  On the domain $[0,\zeta_1)$, we have that $\hat{WG}_2(\zeta_1,\cdot)$ is strictly  increasing. On the domain $(\zeta_1,1]$, we have that $\hat{WG}_2(\zeta_1,\cdot)=0$. Moreover, $\hat{WG}_2(\zeta_1,\zeta_1)=0$ when  $\zeta_1\in[0,1]\backslash[\gamma_2,\gamma_1]$.

We distinguish four cases. In the first case, we let $\zeta_1\in[0,\gamma_2]$. We show that any strategy by agent 2 yields a welfare gain of 0. If $\zeta_2> \zeta_1$, then $(I^*(X),\pi)=(0,0)$ is $(\zeta_1,\zeta_2)$-Pareto optimal, which yields $WG_2(\zeta_1,\zeta_2)=\hat{WG}_2(\zeta_1,\zeta_2)=0$. Thus, it is sufficient to show that $\hat{WG}_2(\zeta_1,\zeta_2)\leq0$ for all $\zeta_2\leq\zeta_1$. This follows from the fact that $\hat{WG}_2(\zeta_1,\cdot)$ is strictly increasing on $[0,\zeta_1]$  and
$$\hat{WG}_2(\zeta_1,\zeta_1)=-\rho(X;\gamma_2)+\rho(X;\zeta_1)\leq -\rho(X;\gamma_2)+\rho(X;\gamma_2)=0,$$ when $\zeta_1\leq \gamma_2$, which follows from the fact that $\rho(X;\gamma)$ is increasing in $\gamma$. Thus, it holds that $WG_2(\zeta_1,\zeta_2)=0$ for all $\zeta_2\in[0,1]$, and thus $ \argmax_{\zeta_2\in[0,1]}WG_2(\zeta_1,\zeta_2)=[0,1]$.

As the second case, we let $\zeta_1\in(\gamma_2,\gamma_1]$. We need to show that $\hat{WG}_1(\zeta_1,\zeta_1)\geq 0$ and $\hat{WG}_2(\zeta_1,\zeta_1)>0$. This follows directly from
\begin{align*}\hat{WG}_1(\zeta_1,\zeta_1)&=\rho(X;\gamma_1)-\rho(X;\zeta_1)\geq \rho(X;\zeta_1)-\rho(X;\zeta_1)=0,
\\ \hat{WG}_2(\zeta_1,\zeta_1)&=-\rho(X;\gamma_2)+\rho(X;\zeta_1)> -\rho(X;\gamma_2)+\rho(X;\gamma_2)=0,\end{align*} which follows from $\zeta_1> \gamma_2$ and the fact that $\rho(X;\gamma)$ is strictly increasing in $\gamma$.

As the third case, we let $\zeta_1\in(\gamma_1,\Gamma_1]$. The function $\hat{WG}_1(\zeta_1,\cdot)=\rho(X;\gamma_1)- (1-\delta)
\rho(X;\cdot)-\delta\rho(X;\zeta_1)$ is strictly decreasing on $(0,\zeta_1)$ and the function $\hat{WG}_2(\zeta_1,\cdot)=-\rho(X;\gamma_2)+(1-\delta)
\rho(X;\cdot)+\delta\rho(X;\zeta_1)$ is strictly increasing on $(0,\zeta_1)$. Agent 2 solves \begin{align*}\max_{\zeta_2\in[0,1]}\hat{WG}_2(\zeta_1,\zeta_2),\text{ s.t. }\hat{WG}_1(\zeta_1,\zeta_2)\geq0,\end{align*} which is thus equal to \begin{align*}\max\{\zeta_2\in[0,\zeta_1]:\hat{WG}_1(\zeta_1,\zeta_2)\geq0\}=f_2(\zeta_1).\end{align*}
We now only need to show that $WG_2(\zeta_1,f_2(\zeta_1))>0$, which implies that there exists a $\zeta_2\in[0,1]$ such that $\hat{WG}_1(\zeta_1,\zeta_2)\geq0$ and $\hat{WG}_2(\zeta_1,\zeta_2)>0$. Since $\zeta_1\leq\Gamma_1$ and since $\hat{WG}_1(\cdot,0)$ is continuous by continuity of $\rho$, it holds that $\hat{WG}_1(\zeta_1,0)\geq0$. Moreover, by monotonicity of $\rho(X;\cdot)$, it follows that \begin{align*}\hat{WG}_2(\zeta_1,0)=-\rho(X;\gamma_2)+(1-\delta)
\rho(X;0)+\delta\rho(X;\zeta_1)>-\rho(X;\gamma_2)+(1-\delta)
\rho(X;\gamma_2)+\delta\rho(X;\gamma_2)=0,\end{align*} which concludes the proof that $\argmax_{\zeta_2\in[0,1]}WG_2(\zeta_1,\zeta_2)=f_2(\zeta_1)$ when $\zeta_1\in(\gamma_1,\Gamma_1]$.

As the fourth and last case, we let $\zeta_1\in(\Gamma_1,1]$. Of course, this case is only relevant when $\Gamma_1<1$. Since $\zeta_1>\Gamma_1$ and since $\hat{WG}_1(\cdot,0)$ is strictly decreasing, it holds that $\hat{WG}_1(\zeta_1,0)<0$. For all $\zeta_2<\zeta_1$, it holds that  \begin{align*}\hat{WG}_1(\zeta_1,\zeta_2)&=\rho(X;\gamma_1)- (1-\delta)
\rho(X;\zeta_2)-\delta\rho(X;\zeta_1)\\ &<\rho(X;\gamma_1)- (1-\delta)
\rho(X;0)-\delta\rho(X;\zeta_1)=\hat{WG}_1(\zeta_1,0)<0.\end{align*} Moreover, for all  $\zeta_2\geq\zeta_1$, $(I^*(X),\pi)=(0,0)$ is $(\zeta_1,\zeta_2)$-Pareto optimal, and thus $\hat{WG}_2(\zeta_1,\zeta_2)=0$. Hence, for all $\zeta_2\in[0,1]$, it holds that $\hat{WG}_1(\zeta_1,\zeta_2)<0$ or $\hat{WG}_2(\zeta_1,\zeta_2)=0$, and thus $WG_2(\zeta_1,\zeta_2)=0$. Hence, we have $\argmax_{\zeta_2\in[0,1]}WG_2(\zeta_1,\zeta_2)=[0,1]$. This concludes the proof.
\end{proof}

If $\zeta_1>\Gamma_1$ and $\zeta_2<\Gamma_2$, then $\hat{WG}_i(\zeta_1,\zeta_2)<0$ for all $i\in\{1,2\}$. This is a violation of Pareto optimality of $(I,\pi)=(X,\pi)$. Thus, at most one of the inequalities $\Gamma_1<1$ and $\Gamma_2>0$ hold. An example of the best-response correspondences in Lemma \ref{lem:BR1} is provided in Figure {\ref{fig:exampleBR}.

\begin{figure}[h!]\centering
\begin{pspicture}(0,-1)(17,7)
\psline(1,0)(1,6)
\psline(1,6)(7,6)
\psline(7,6)(7,0)\psline(7,0)(1,0)
\psline(3,0)(3,6)
\psline[linestyle=dotted](6,0)(6,6)
\psline(3,2)(5,4)\psline(5,4)(6,0)
\uput[180]{90}(0,3){$\argmax_{\zeta_2'\in[0,1]}WG_2(\zeta_1,\zeta_2')\rightarrow$}
\uput[-90](4,-.3){$\zeta_1\rightarrow$}
\psframe[fillstyle=hlines,linestyle=none](1,0)(3,6)
\psframe[fillstyle=hlines,linestyle=none](6,0)(7,6)
\psline(0.7,0)(1,0)\psline(1,-.3)(1,0)\uput[180](.7,-0.5){0}\psline(0.7,2)(1,2)\uput[180](.7,2){$\gamma_2$}
\psline(0.7,4)(1,4)\uput[180](.7,4){$\gamma_1$}
\psline(6,-.3)(6,0)\uput[-90](6,-.25){$\Gamma_1$}
\psline(0.7,6)(1,6)\uput[180](.7,6){1}
\psline(3,-0.3)(3,0)\uput[-90](3,-.25){$\gamma_2$}\psline(5,-0.3)(5,0)\uput[-90](5,-.25){$\gamma_1$}
\psline(7,-0.3)(7,0)\uput[-90](7,-.25){1}

\psline(10,0)(10,6)
\psline(10,6)(16,6)
\psline(16,6)(16,0)\psline(16,0)(10,0)
\psline(14,0)(14,6)
\psline(12,2)(14,4)\psline(12,2)(10,3)
\uput[180]{90}(9,3){$\argmax_{\zeta_1'\in[0,1]}WG_1(\zeta'_1,\zeta_2)\rightarrow$}
\uput[-90](13,-.3){$\zeta_2\rightarrow$}
\psframe[fillstyle=hlines,linestyle=none](14,0)(16,6)
\psline(9.7,0)(10,0)\psline(10,-.3)(10,0)\uput[180](9.7,-0.5){0}\psline(9.7,2)(10,2)\uput[180](9.7,2){$\gamma_2$}
\psline(9.7,4)(10,4)\uput[180](9.7,4){$\gamma_1$}
\psline(9.7,6)(10,6)\uput[180](9.7,6){1}
\psline(12,-0.3)(12,0)\uput[-90](12,-.25){$\gamma_2$}\psline(14,-0.3)(14,0)\uput[-90](14,-.25){$\gamma_1$}
\psline(16,-0.3)(16,0)\uput[-90](16,-.25){1}
\end{pspicture}
\caption{Graphical illustration of the best-response correspondences $\argmax_{\zeta_2'\in[0,1]}WG_2(\zeta_1,\zeta_2')$ (left graph) and $\argmax_{\zeta_1'\in[0,1]}WG_1(\zeta'_1,\zeta_2)$ (right graph), corresponding to the setting in Example \ref{ex:2}. The best-respondences are plotted against the parameter $\zeta_i\in[0,1]$ of the other agent. The functional form is shown in \eqref{eq:br2} and \eqref{eq:br1}. Here, we find $\Gamma_1=5/6$, and $\Gamma_2=-\infty$. 
A solid line means that the line itself constitutes the best-responses, while a dotted line means that the line itself does not constitutes the best-responses.}
\label{fig:exampleBR}\end{figure}
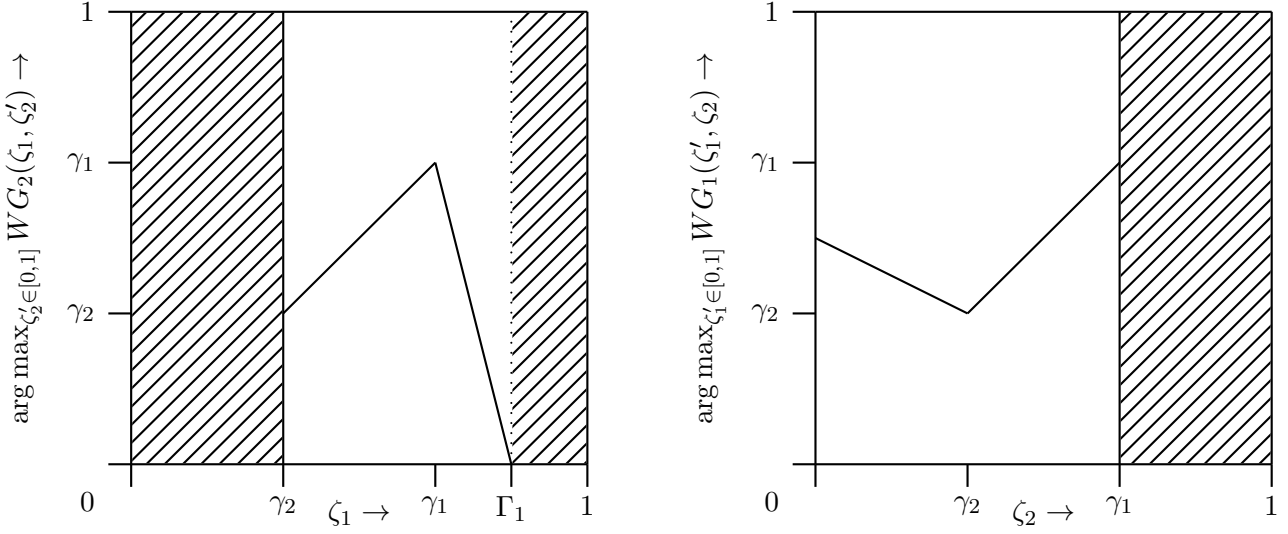

Based on \rev{Lemma \ref{lem:BR1}}, we can identify \rev{a set of Nash equilibria}.

\begin{theorem}\label{th:NE}Let $\gamma_1> \gamma_2$, and $\rho$ satisfy Assumption \ref{ass:M}. A set of Nash equilibria is given by
\begin{equation}\left\{(\gamma_1^*,\gamma_2^*):\gamma_1^*\in[0,\gamma_2]\cup (\Gamma_1,1],\gamma_2^*\in[0,\Gamma_2)\cup[\gamma_1,0]\right\} \cup\{(\gamma,\gamma):\gamma\in[\gamma_2,\gamma_1]\}.\label{eq:SNE}\end{equation}
\end{theorem}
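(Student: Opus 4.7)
The plan is to verify directly, using the explicit piecewise description of the best-response correspondences supplied by Lemma \ref{lem:BR1}, that every pair in the set \eqref{eq:SNE} is a fixed point of the joint best-response map and therefore a Nash equilibrium. Because \eqref{eq:SNE} is presented as the union of a ``rectangular'' component and a ``diagonal'' component, the verification splits accordingly and reduces to set-membership checks in \eqref{eq:br2}--\eqref{eq:br1}. Nothing new about the functions $WG_1,WG_2$ is needed beyond what is already encoded in Lemma \ref{lem:BR1}; the theorem merely asserts that the given set is contained in the Nash-equilibrium set (not that it exhausts it), so no minimality or converse argument is required.

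For the rectangular component, suppose $\gamma_1^*\in[0,\gamma_2]\cup(\Gamma_1,1]$. Then $\gamma_1^*$ falls in the first or fourth branch of \eqref{eq:br2}, so $\argmax_{\zeta_2'\in[0,1]}WG_2(\gamma_1^*,\zeta_2')=[0,1]$, and this contains any $\gamma_2^*\in[0,1]$. Symmetrically, if $\gamma_2^*\in[0,\Gamma_2)\cup[\gamma_1,1]$, then $\gamma_2^*$ falls in the first or fourth branch of \eqref{eq:br1}, so $\argmax_{\zeta_1'\in[0,1]}WG_1(\zeta_1',\gamma_2^*)=[0,1]$, and this contains any $\gamma_1^*\in[0,1]$. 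Hence every pair in the rectangle satisfies both conditions of the Nash-equilibrium definition simultaneously, since in each coordinate the opposing agent's best-response set is the full interval $[0,1]$.

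For the diagonal component, fix $\gamma\in[\gamma_2,\gamma_1]$ and consider $(\gamma,\gamma)$. In the interior $\gamma\in(\gamma_2,\gamma_1)$, the second branches of \eqref{eq:br2} and \eqref{eq:br1} apply and give $\argmax_{\zeta_2'\in[0,1]}WG_2(\gamma,\zeta_2')=\{\gamma\}$ and $\argmax_{\zeta_1'\in[0,1]}WG_1(\zeta_1',\gamma)=\{\gamma\}$, so the fixed-point condition holds as an equality on both sides. At the boundary $\gamma=\gamma_2$, agent $2$'s best-response set equals $[0,1]$ via the first branch of \eqref{eq:br2} (since $\zeta_1=\gamma_2$), while agent $1$'s best response equals $\{\gamma_2\}$ via the third branch of \eqref{eq:br1}; at $\gamma=\gamma_1$, agent $2$'s best response equals $\{\gamma_1\}$ via the second branch of \eqref{eq:br2} while agent $1$'s best-response set equals $[0,1]$ via the fourth branch of \eqref{eq:br1}. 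In each case $(\gamma,\gamma)$ belongs to both best-response sets, so it is a Nash equilibrium. The only subtlety in executing this plan is book-keeping the open/closed endpoints $\gamma_2,\gamma_1,\Gamma_1,\Gamma_2$ across the various pieces of Lemma \ref{lem:BR1} so that the membership arguments at these transition points are applied to the correct branch; this is where I would pay the most attention, but it involves no new calculation.
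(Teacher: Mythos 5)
Your proof is correct and takes essentially the same approach as the paper: both verify, by direct appeal to the piecewise description in Lemma \ref{lem:BR1}, that each point in the rectangular component and each point on the diagonal $\{(\gamma,\gamma):\gamma\in[\gamma_2,\gamma_1]\}$ lies in both best-response sets, so membership in the Nash-equilibrium set follows immediately. One tiny slip worth noting: for interior $\gamma\in(\gamma_2,\gamma_1)$ the relevant branch of \eqref{eq:br1} is the \emph{third} (giving $\zeta_2$), not the second as you wrote, though your stated conclusion $\argmax_{\zeta_1'}WG_1(\zeta_1',\gamma)=\{\gamma\}$ is the right one.
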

\begin{proof} From \eqref{eq:br2} and \eqref{eq:br1}, we immediately get that $\gamma\in \argmax_{\zeta_1\in [0,1]}WG_1(\zeta_1,\gamma)$ and $\gamma\in \argmax_{\zeta_2\in [0,1]}WG_2(\gamma,\zeta_2)$ for all $\gamma\in[\gamma_2,\gamma_1]$. Thus, $(\gamma,\gamma)$ is a Nash equilibrium for any $\gamma\in[\gamma_2,\gamma_1]$.
If $\gamma_1^*\in[0,\gamma_2]\cup (\Gamma_1,1]$ and $\gamma_2^*\in[0,\Gamma_2)\cup[\gamma_1,0]$, then we get again from Lemma \ref{lem:BR1} that $\argmax_{\zeta_2\in[0,1]}WG_2(\gamma^*_1,\zeta_2)=\argmax_{\zeta_1\in[0,1]}WG_1(\zeta_1,\gamma_2^*)=[0,1]$. Therefore, we have that $\gamma_2^*\in\argmax_{\zeta_2\in[0,1]}WG_2(\gamma^*_1,\zeta_2)$ and $\gamma_1^*\in\argmax_{\zeta_1\in[0,1]}WG_1(\zeta_1,\gamma_2^*)$, and thus $(\gamma_1^*,\gamma_2^*)$ is a Nash equilibrium. So, the set \eqref{eq:SNE} is a set of Nash equilibria.
\end{proof}

We display an example of Nash equilibria in Figure \ref{fig:1}.
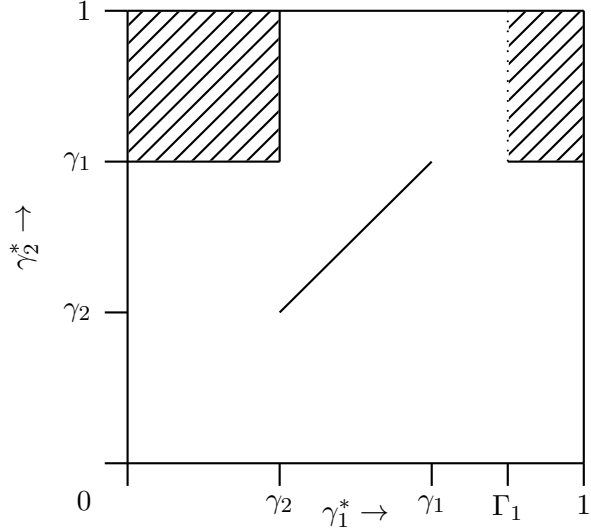
\begin{figure}[h!]\centering
\begin{pspicture}(0,-1)(8,7)
\psline(1,0)(1,6)
\psline(1,6)(7,6)
\psline(7,6)(7,0)\psline(7,0)(1,0)
\psline(1,4)(3,4)\psline(3,4)(3,6)
\psline[linestyle=dotted](6,4)(6,6)\psline(6,4)(7,4)
\psline(3,2)(5,4)
\uput[180]{90}(0,3){$\gamma_2^*\rightarrow$}\uput[-90](4,-.3){$\gamma_1^*\rightarrow$}
\psframe[fillstyle=hlines,linestyle=none](1,4)(3,6)
\psframe[fillstyle=hlines,linestyle=none](6,4)(7,6)
\psline(0.7,0)(1,0)\psline(1,-.3)(1,0)\uput[180](.7,-0.5){0}\psline(0.7,2)(1,2)\uput[180](.7,2){$\gamma_2$}
\psline(0.7,4)(1,4)\uput[180](.7,4){$\gamma_1$}
\psline(6,-.3)(6,0)\uput[-90](6,-.25){$\Gamma_1$}
\psline(0.7,6)(1,6)\uput[180](.7,6){1}
\psline(3,-0.3)(3,0)\uput[-90](3,-.25){$\gamma_2$}\psline(5,-0.3)(5,0)\uput[-90](5,-.25){$\gamma_1$}
\psline(7,-0.3)(7,0)\uput[-90](7,-.25){1}
\end{pspicture}
\caption{Graphical illustration of Nash equilibria if $2/3=\gamma_1> \gamma_2=1/3$, $\Gamma_1=5/6$, and $\Gamma_2=-\infty$. These Nash equilibria are given in \eqref{eq:SNE}. A solid line means that the line itself constitutes Nash equilibria, while a dotted line means that the line itself does not constitutes Nash equilibria.}
\label{fig:1}\end{figure}

We get directly from the comonotonic additivity that for any $\gamma\in[\gamma_2,\gamma_1]$, the allocation of the welfare gains between the insurer and the reinsurer is
\begin{eqnarray} WG_1(\gamma,\gamma)&=&\hat{WG}_1(\gamma,\gamma)=\rho(X;\gamma_1)-\rho(X;\gamma)\geq 0,\label{wgg1}\\
 WG_2(\gamma,\gamma)&=&\hat{WG}_2(\gamma,\gamma)=\rho(X;\gamma)-\rho(X;\gamma_2)\geq 0,\label{wgg2} \end{eqnarray}
It follows that for $\gamma\in(\gamma_2,\gamma_1)$, the Nash equilibrium $(\gamma,\gamma)$ induces a reinsurance contract that yields a strict improvement compared to the status quo for both agents. Theorem \ref{th:NE} selects some Nash equilibria, but not necessarily all of them. The next result shows that for all Nash equilibria not in \eqref{eq:SNE}, both agents are indifferent compared to the status quo, in other words we determine all Nash equilibria in which at least one agent strictly benefits. In fact, we show that at equilibrium agents appear as having the same risk aversion parameters.

\begin{theorem} Let $\gamma_1> \gamma_2$, and $\rho$ satisfy Assumption \ref{ass:M}. The set of Nash equilibria $(\gamma_1^*,\gamma_2^*)$ in which there exists an $i\in\{1,2\}$ for which $WG_i(\gamma_1^*,\gamma_2^*)>0$ is given by
\begin{equation}\{(\gamma,\gamma):\gamma\in[\gamma_2,\gamma_1]\}.\label{eq:SNEP}\end{equation}\end{theorem}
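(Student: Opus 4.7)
The plan is to prove both set-inclusions. For the easy direction ($\supseteq$), Theorem \ref{th:NE} already asserts that every $(\gamma,\gamma)$ with $\gamma \in [\gamma_2, \gamma_1]$ is a Nash equilibrium; strict benefit for at least one agent then follows from \eqref{wgg1}--\eqref{wgg2} together with the strict monotonicity of $\rho(X;\cdot)$ guaranteed by Assumption \ref{ass:M}. Indeed, for every such $\gamma$, either $\gamma < \gamma_1$, giving $WG_1(\gamma,\gamma) = \rho(X;\gamma_1) - \rho(X;\gamma) > 0$, or $\gamma = \gamma_1 > \gamma_2$, giving $WG_2(\gamma,\gamma) > 0$.

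For the reverse inclusion ($\subseteq$), I would take an arbitrary Nash equilibrium $(\gamma_1^*,\gamma_2^*)$ with $\max\{WG_1,WG_2\}(\gamma_1^*,\gamma_2^*) > 0$ and argue it must belong to $\{(\gamma,\gamma):\gamma\in[\gamma_2,\gamma_1]\}$. The first step is a reduction: strict positivity of some $WG_i$ forces $I^*(X) \stackrel{d}{=} X$, because \eqref{eq:I*} and the formulas for $\hat{WG}_i$ yield $\hat{WG}_1 = \hat{WG}_2 = 0$ whenever $I^*(X) \stackrel{d}{=} 0$. By Definition \ref{c:same_zeta}, $I^*(X) \stackrel{d}{=} X$ means either $\gamma_1^* > \gamma_2^*$, or $\gamma_1^* = \gamma_2^* \in [\gamma_2,\gamma_1]$. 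The second possibility is the desired conclusion, so the remaining task is to rule out $\gamma_1^* > \gamma_2^*$.

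Assuming $\gamma_1^* > \gamma_2^*$, I would case-split on $\gamma_1^*$ using Lemma \ref{lem:BR1}. If $\gamma_1^* \in (\gamma_2,\gamma_1]$, agent 2's unique best response is $\gamma_1^*$, contradicting $\gamma_2^* < \gamma_1^*$. If $\gamma_1^* \leq \gamma_2$, then $\gamma_2^* < \gamma_1^* \leq \gamma_2$ and a direct computation using strict monotonicity of $\rho(X;\cdot)$ gives $\hat{WG}_2(\gamma_1^*,\gamma_2^*) < -\rho(X;\gamma_2) + (1-\delta)\rho(X;\gamma_2) + \delta\rho(X;\gamma_2) = 0$; similarly, if $\gamma_1^* > \Gamma_1$, then $\hat{WG}_1(\gamma_1^*,\gamma_2^*) \leq \hat{WG}_1(\gamma_1^*,0) < 0$ by monotonicity in $\zeta_2$ and the definition of $\Gamma_1$. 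In both sub-cases the individual rationality constraint fails, which would force $WG_1 = WG_2 = 0$, contradicting the assumption on $(\gamma_1^*,\gamma_2^*)$. The only remaining case is $\gamma_1^* \in (\gamma_1, \Gamma_1]$, where by Lemma \ref{lem:BR1} agent 2's unique best response is $\gamma_2^* = f_2(\gamma_1^*)$, so $\hat{WG}_1(\gamma_1^*,\gamma_2^*) = 0$.

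The main obstacle is this final case, which at first glance looks like a legitimate equilibrium (the reinsurer extracts all surplus by pushing the insurer to indifference). The key identity that resolves it is $\hat{WG}_1(\gamma_1^*,\gamma_2^*) + \hat{WG}_2(\gamma_1^*,\gamma_2^*) = \rho(X;\gamma_1) - \rho(X;\gamma_2)$, which follows from \eqref{eq:totWG} when $I^*(X) \stackrel{d}{=} X$ and is strictly positive since $\gamma_1 > \gamma_2$. Combined with $\hat{WG}_1 = 0$, this forces $\hat{WG}_2(\gamma_1^*,\gamma_2^*) > 0$ strictly. Using the strict monotonicity of $\hat{WG}_2(\cdot,\gamma_2^*)$ in $\zeta_1$ together with continuity of $\rho(X;\cdot)$, I would invoke the intermediate value theorem to exhibit a $\zeta_1 \in [\gamma_2^*, \gamma_1^*)$ with $\hat{WG}_2(\zeta_1,\gamma_2^*) \geq 0$; at this $\zeta_1$ the strict decrease of $\hat{WG}_1(\cdot,\gamma_2^*)$ together with $\hat{WG}_1(\gamma_1^*,\gamma_2^*) = 0$ yields $\hat{WG}_1(\zeta_1,\gamma_2^*) > 0$, hence $WG_1(\zeta_1,\gamma_2^*) > 0 = WG_1(\gamma_1^*,\gamma_2^*)$. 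This contradicts the Nash equilibrium condition for agent 1 and completes the proof.
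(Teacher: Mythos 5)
Your proof is correct and relies on the same key mechanism as the paper's, namely that if $(\gamma_1^*,\gamma_2^*)$ is an off-diagonal candidate equilibrium with strictly positive gain then agent 1 can find a slightly smaller $\zeta_1$ that strictly improves $\hat{WG}_1$ while keeping $\hat{WG}_2\geq 0$, using the strict monotonicity and continuity from Assumption~\ref{ass:M}. The difference is structural: the paper argues directly from $WG_2(\gamma_1^*,\gamma_2^*)>0$ (and, by an omitted symmetric argument, $WG_1>0$) that $\gamma_1^*>\gamma_2^*$, and then immediately produces the deviating $\zeta_1\in(\gamma_2^*,\gamma_1^*)$, without ever splitting on the location of $\gamma_1^*$. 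Your proof instead reduces to $I^*(X)=X$ first, then case-splits on $\gamma_1^*$ via Lemma~\ref{lem:BR1}, handling three of the four sub-cases by identifying an outright best-response mismatch or an IR violation, and reserving the deviation argument for the $(\gamma_1,\Gamma_1]$ case. Both routes are valid; the paper's is shorter because the deviation argument is uniform across all sub-cases (the impossible sub-cases $\gamma_1^*\leq\gamma_2$ and $\gamma_1^*>\Gamma_1$ are silently excluded by $\hat{WG}_2>0$ and $\hat{WG}_1\geq 0$, respectively), whereas your version makes those exclusions explicit. One small point of hygiene: when you invoke continuity to exhibit $\zeta_1$, you should take $\zeta_1\in(\gamma_2^*,\gamma_1^*)$ rather than $[\gamma_2^*,\gamma_1^*)$, so that the full-insurance formulas for $\hat{WG}_i$ apply; and since $\hat{WG}_2(\gamma_1^*,\gamma_2^*)>0$ strictly, plain continuity already gives an open neighborhood where $\hat{WG}_2>0$ — the intermediate value theorem is not actually needed.
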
\begin{proof}
From Theorem \ref{th:NE} it follows that the set \eqref{eq:SNEP} is a subset of the set of Nash equilibria. From \eqref{wgg1}-\eqref{wgg2} and Assumption \ref{ass:M}, it follows that all elements of the set \eqref{eq:SNEP} yield a contract for which at least one agent has a strictly positive welfare gain. So, in this proof, we only need to show that there does not exist another Nash equilibrium $(\gamma_1^*,\gamma_2^*)$ in which there exists an $i\in\{1,2\}$ for which $WG_i(\gamma_1^*,\gamma_2^*)>0$.

Suppose that $(\gamma_1^*,\gamma_2^*)\in[0,1]^2$ is a Nash equilibrium that is not in \eqref{eq:SNEP}, but with $WG_2(\gamma_1^*,\gamma_2^*)>0$. It then holds that $\hat{WG}_1(\gamma_1^*,\gamma_2^*)\geq0$ and $\hat{WG}_2(\gamma_1^*,\gamma_2^*)>0$. Then, we get from the definition of $\hat{WG}_2$ that $\gamma_1^*>\gamma_2^*$  or $\gamma_1^*=\gamma_2^*\in[\gamma_2,\gamma_1]$. Since $(\gamma_1^*,\gamma_2^*)$ is not in \eqref{eq:SNEP}, it must hold $\gamma_1^*>\gamma_2^*$. Since $\hat{WG}_1(\cdot,\gamma_2^*)=\rho(X;\gamma_1)- (1-\delta)
\rho(X;\gamma_2^*)-\delta\rho(X;\cdot)$ is strictly decreasing on and $\hat{WG}_2(\cdot,\gamma^*_2)=-\rho(X;\gamma_2)+(1-\delta)
\rho(X;\gamma_2^*)+\delta\rho(X;\cdot)$ is continuous on $(\gamma_2^*,1]$ by Assumption \ref{ass:M}, there exists a $\zeta_1\in(\gamma_2^*,\gamma_1^*)$ such that $\hat{WG}_1(\zeta_1,\gamma_2^*)>\hat{WG}_1(\gamma_1^*,\gamma_2^*)\geq0$ and $\hat{WG}_2(\zeta_1,\gamma_2^*)\geq0$. So, $WG_1(\zeta_1,\gamma_2^*)>\hat{WG}_1(\gamma_1^*,\gamma_2^*)=WG_1(\gamma_1^*,\gamma_2^*)$. Hence, $\zeta_1$ is a better response than $\gamma_1^*$ for agent 1, which is a contradiction with the assumption that $(\gamma_1^*,\gamma_2^*)$ is a Nash equilibrium.

The proof that there does not exist a Nash equilibrium $(\gamma_1^*,\gamma_2^*)\in[0,1]^2$ with $WG_1(\gamma_1^*,\gamma_2^*)>0$ which is not in \eqref{eq:SNEP} is similar, and thus omitted.
\end{proof}

Note that for the Nash equilibria $(\gamma,\gamma)$ with $\gamma\in[\gamma_2,\gamma_1]$, the corresponding reinsurance contract $(I,\pi)$ is $(\gamma_1,\gamma_2)$-Pareto optimal (i.e., it coincides with the Pareto optimal without the strategic behavior). The latter follows from the fact that $I$ solves \eqref{eq:I} \citep[see Theorem 3.1 of][]{asimit2018}. Hence, if there is some reinsurance contract that strictly benefits at least one of the agents (which holds when $\gamma_1>\gamma_2$), we have shown that \eqref{eq:SNEP} is the set of Pareto optimal Nash equilibria.  The exact choice of the common risk parameter at the equilibrium does not affect the contract but only the premium and hence the individual welfare gains, or in other words the allocation of the total welfare gain between them (larger $\gamma$ means higher premium paid by the insurer and hence more gain for the reinsurer).

\begin{remark}\label{r:no delta}
It is important to emphasize that under the Nash equilibria $(\gamma,\gamma)$ with $\gamma\in[\gamma_2,\gamma_1]$, the exogenous choice of $\delta$ is irrelevant for the reinsurance contract $(I,\pi)$. In other words, the Nash equilibrium game on the risk aversion coefficient transfers the market power to the exact choice of the common submitted risk aversion $\gamma$. This is a significant feature of the proposed equilibrium model, since the agent's market power is now determined within the equilibrium and not as an exogenous parameter.
\end{remark}
\rev{\begin{remark}For any Nash equilibrium $(\gamma,\gamma)$ with $\gamma\in[\gamma_2,\gamma_1]$, it holds that $\pi=\rho(I(X);\gamma)$. Since $\gamma\geq\gamma_2$, it follows that $\pi\geq\rho(I(X);\gamma_2)$. If $\rho$ is a distortion risk measure with $g(s;\gamma_2)\geq s$ for all $s\in[0,1]$, then it holds that $\pi\geq\rho(I(X);\gamma_2)\geq E[I(X)]$.
\end{remark}}

\subsection{Stackelberg equilibrium}

In the discussion above, we showed that even the Nash equilibria that strictly increase the agents' welfare are usually non-unique. One way to select a particular one could be through the concept of the Stackelberg equilibrium. It turns out that in our setting the Stackelberg equilibrium is a refinement of the Nash equilibrium.

More precisely, we have seen that the Nash equilibrium which yields improvement of the status quo is when agents appear homogeneous with respect to their submitted risk aversions. We also have seen that there is  monotonicity of the gains' allocation with respect to the common equilibrium risk aversion (higher $\gamma$, higher gain to the reinsurer). Therefore, the determination of a unique Nash equilibrium is equivalent to the determination of the common $\gamma$.

According to the Stackelberg equilibrium, one of the two agents moves first (called the \emph{leader}), and submits his parameter within the strategic set. Thereafter, the other agent (called the \emph{follower}) observes this parameter, and responds by stating his best response in an optimal way. In other words, the leader of the game decides on an optimal strategy while taking into consideration the optimal response strategies of the follower\footnote{Similar equilibrium argument has been applied in \cite{chen2018}, where the role of the leader is given to the reinsurer.}.   Below, we state the formal definition of the Stackelberg equilibrium within our Nash equilibrium setting.

\begin{definition} The strategy profile $(\gamma_1^*,\gamma_2^*)\in[0,1]^2$ is a Stackelberg equilibrium when: \begin{itemize}\item if the insurer is the leader and the reinsurer the follower, then
$$\gamma_2^*\in \argmax_{\zeta_2\in[0,1]}WG_2(\gamma^*_1,\zeta_2),$$
and there is no other pair $(\hat\gamma_1,\hat \gamma_2)$ such that
$$\hat \gamma_2\in \argmax_{\zeta_2\in[0,1]}WG_2(\hat \gamma_1,\zeta_2)\text{  and } WG_1(\gamma^*_1,\gamma^*_2)<WG_1(\hat\gamma_1,\hat\gamma_2);$$
\item if the reinsurer is the leader and the insurer the follower, then
$$\gamma_1^*\in \argmax_{\zeta_1\in[0,1]}WG_1(\zeta_1,\gamma_2^*),$$ and there is no other pair $(\hat\gamma_1,\hat \gamma_2)$ such that
$$\hat \gamma_1\in \argmax_{\zeta_1\in[0,1]}WG_1( \zeta_1,\hat\gamma_2)\text{ and }WG_2(\gamma^*_1,\gamma^*_2)<WG_2(\hat\gamma_1,\hat\gamma_2).$$
    \end{itemize}
\end{definition}
We have seen in Theorem \ref{th:Stack} that the Nash equilibrium with strict improvement is the one that each agent mimics the strategy of the other. If the leader of the transaction takes this into account, the following lemma describes his optimal decision.
 \begin{lemma}\label{lem:Stack} If $\gamma_1> \gamma_2$, the following statements hold:
\begin{itemize}\item  $\argmax_{\gamma\in [0,1]}WG_2(\gamma,\gamma)$ is single-valued and given by $\gamma_1$;
\item $\argmax_{\gamma\in [0,1]}WG_1(\gamma,\gamma)$ is single-valued and given by $\gamma_2$.\end{itemize}\end{lemma}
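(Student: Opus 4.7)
The plan is to evaluate $WG_1(\gamma,\gamma)$ and $WG_2(\gamma,\gamma)$ explicitly as functions of the single variable $\gamma\in[0,1]$, exploit the strict monotonicity of $\rho(X;\cdot)$ guaranteed by Assumption \ref{ass:M}, and conclude by direct inspection.

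First I would split the domain into $\gamma\in[\gamma_2,\gamma_1]$ and $\gamma\notin[\gamma_2,\gamma_1]$. On the diagonal $\zeta_1=\zeta_2=\gamma$, the definitions of $\hat{WG}_i$ together with the convention in Definition \ref{c:same_zeta} give $\hat{WG}_i(\gamma,\gamma)=0$ whenever $\gamma\notin[\gamma_2,\gamma_1]$, so in particular $WG_i(\gamma,\gamma)=0$ there. For $\gamma\in[\gamma_2,\gamma_1]$ we are in the regime where the Nash bargaining solution selects $I^*(X)=X$, and the identities (wgg1)–(wgg2) already established in the excerpt yield
\begin{equation*}
WG_1(\gamma,\gamma)=\rho(X;\gamma_1)-\rho(X;\gamma),\qquad WG_2(\gamma,\gamma)=\rho(X;\gamma)-\rho(X;\gamma_2).
\end{equation*}
Both expressions are nonnegative on $[\gamma_2,\gamma_1]$, consistent with the piecewise definition.

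Next I would invoke Assumption \ref{ass:M}, which asserts that $\rho(X;\cdot)$ is strictly increasing and continuous on $[0,1]$ (using that $X$ is not almost surely constant, so $\mathbb{P}(X>0)>0$ and the identity $I(x)=x$ lies in $\mathcal I$). Consequently $\gamma\mapsto WG_2(\gamma,\gamma)$ is strictly increasing on $[\gamma_2,\gamma_1]$, attaining its maximum uniquely at $\gamma=\gamma_1$ with value $\rho(X;\gamma_1)-\rho(X;\gamma_2)>0$; while $\gamma\mapsto WG_1(\gamma,\gamma)$ is strictly decreasing on the same interval, attaining its maximum uniquely at $\gamma=\gamma_2$, again with value $\rho(X;\gamma_1)-\rho(X;\gamma_2)>0$.

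Finally, since both of these maxima are strictly positive while $WG_i(\gamma,\gamma)=0$ off $[\gamma_2,\gamma_1]$, no point outside $[\gamma_2,\gamma_1]$ can compete. Hence $\argmax_{\gamma\in[0,1]} WG_2(\gamma,\gamma)=\{\gamma_1\}$ and $\argmax_{\gamma\in[0,1]} WG_1(\gamma,\gamma)=\{\gamma_2\}$, proving the lemma. There is no real obstacle here: the whole argument reduces to reading off the closed-form expressions (wgg1)–(wgg2) and applying the strict monotonicity of $\rho(X;\cdot)$ in the parameter. The only minor subtlety to flag is that $\rho(X;\gamma_1)>\rho(X;\gamma_2)$ strictly (ensuring the maxima beat the zero value outside the interval), which follows because $\gamma_1>\gamma_2$ and $X$ is non-constant, so $\rho(X;\cdot)$ is genuinely strictly increasing.
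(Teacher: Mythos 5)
Your proof is correct and follows essentially the same route as the paper's: reduce the maximization to $[\gamma_2,\gamma_1]$ (where the closed forms (wgg1)--(wgg2) apply), then invoke strict monotonicity of $\rho(X;\cdot)$ and the strictly positive gap $\rho(X;\gamma_1)-\rho(X;\gamma_2)$ to pin down the unique maximizer. If anything, your treatment of the diagonal points with $\gamma\notin[\gamma_2,\gamma_1]$ is slightly more careful than the paper's, which asserts $\hat{WG}_2(\gamma,\gamma)<0$ for $\gamma<\gamma_2$ whereas Definition \ref{c:same_zeta} with $I^*(X)=0$ actually gives $\hat{WG}_i(\gamma,\gamma)=0$ there; either way $WG_i(\gamma,\gamma)=0$ off the interval, so the conclusion is unaffected.
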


\begin{proof} We show the first result, and the proof of the second result is similar and thus omitted. If $\gamma< \gamma_2$ then $\hat{WG}_2(\gamma,\gamma)<0$, and if $\gamma> \gamma_1$ then $\hat{WG}_1(\gamma,\gamma)<0$. So, since $WG_2(\gamma,\gamma)\geq0$ for all $\gamma\in[0,1]$, we get by construction that $$\max_{\gamma\in [0,1]}WG_2(\gamma,\gamma)=\max_{\gamma\in [\gamma_2,\gamma_1]}WG_2(\gamma,\gamma).$$ From \eqref{wgg1}-\eqref{wgg2} and monotonicity of $\rho(X;\cdot)$, it follows that $\hat{WG}_1(\gamma,\gamma)\geq0$ and $\hat{WG}_2(\gamma,\gamma)\geq0$ for all $\gamma\in [\gamma_2,\gamma_1]$, and thus
$$\max_{\gamma\in [\gamma_2,\gamma_1]}WG_2(\gamma,\gamma)=\max_{\gamma\in [\gamma_2,\gamma_1]}\hat{WG}_2(\gamma,\gamma)=\max_{\gamma\in [\gamma_2,\gamma_1]}-\rho(X;\gamma_2)+\rho(X;\gamma).$$
By monotonicity of $\rho(X;\cdot)$, there is only one maximizer, and in fact equal to $\gamma_1$. Therefore, $\argmax_{\gamma\in [0,1]}WG_2(\gamma,\gamma)$ is single-valued and given by $\gamma_1$. This concludes the proof.
\end{proof}

A description of the Stackelberg equilibria is stated in the following theorem, where Lemma \ref{lem:Stack} is used.
\begin{theorem}\label{th:Stack}Let $\gamma_1> \gamma_2$. Then, the following statements hold:
\begin{itemize}\item if the insurer is the leader,
the Stackelberg equilibrium is given by $(\gamma_2,\gamma_2)$ and all welfare gain goes to the insurer;
\item if the reinsurer is the leader, the Stackelberg equilibrium is given by $(\gamma_1,\gamma_1)$ and all the welfare gain goes to the reinsurer.\end{itemize}
\end{theorem}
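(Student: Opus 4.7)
My plan is to show that $(\gamma_2,\gamma_2)$ (respectively $(\gamma_1,\gamma_1)$) is a Stackelberg equilibrium when the insurer (respectively the reinsurer) leads, by checking two things: feasibility of the pair in the Stackelberg sense, and global welfare-optimality for the leader. Because the two cases are perfectly symmetric, I would present the insurer-as-leader case in detail and remark that the other follows by swapping indices.

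For feasibility, I would invoke Lemma \ref{lem:BR1}: setting $\gamma_1^*=\gamma_2$ places the argument in the regime $\zeta_1\le\gamma_2$, so $\argmax_{\zeta_2\in[0,1]}WG_2(\gamma_2,\zeta_2)=[0,1]$, which certainly contains $\gamma_2^*=\gamma_2$. Moreover $\zeta_1=\zeta_2=\gamma_2\in[\gamma_2,\gamma_1]$, so Definition \ref{c:same_zeta} gives $I^*(X)=X$. Plugging into \eqref{wgg1}--\eqref{wgg2} then yields $WG_1(\gamma_2,\gamma_2)=\rho(X;\gamma_1)-\rho(X;\gamma_2)$ and $WG_2(\gamma_2,\gamma_2)=0$, establishing that the entire welfare gain accrues to the insurer at this pair.

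For optimality, I would bound the leader's welfare uniformly across feasible Stackelberg pairs. Pick any $(\hat\gamma_1,\hat\gamma_2)$ with $\hat\gamma_2\in\argmax_{\zeta_2}WG_2(\hat\gamma_1,\zeta_2)$. By \eqref{eq:WGi}, both $WG_1(\hat\gamma_1,\hat\gamma_2)\ge0$ and $WG_2(\hat\gamma_1,\hat\gamma_2)\ge0$, so $WG_1\le WG_1+WG_2\le\hat{WG}_1+\hat{WG}_2$ whenever the right-hand side is non-negative (and $WG_1=0$ otherwise, in which case the bound is trivial). By \eqref{eq:totWG}, this total welfare equals $\rho(X;\gamma_1)-\rho(X-I^*(X);\gamma_1)-\rho(I^*(X);\gamma_2)$ for the induced $I^*\in\mathcal I$. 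Since $\gamma_1>\gamma_2$, Proposition \ref{prop:PO} applied to the \emph{true} preferences implies that $\rho(X-I(X);\gamma_1)+\rho(I(X);\gamma_2)$ is minimized at $I(X)=X$, with minimum value $\rho(X;\gamma_2)$. Therefore $WG_1(\hat\gamma_1,\hat\gamma_2)\le\rho(X;\gamma_1)-\rho(X;\gamma_2)=WG_1(\gamma_2,\gamma_2)$, so no feasible pair beats $(\gamma_2,\gamma_2)$ for the insurer. The reinsurer-as-leader case is handled analogously: feasibility of $(\gamma_1,\gamma_1)$ follows from Lemma \ref{lem:BR1} (case $\zeta_2\ge\gamma_1$), and the same total-welfare bound caps the reinsurer's gain at $\rho(X;\gamma_1)-\rho(X;\gamma_2)$, which is attained at $(\gamma_1,\gamma_1)$ via \eqref{wgg2}.

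The main obstacle is not analytical but book-keeping: one has to make sure that in every branch of the piecewise definition \eqref{eq:WGi}, and for every region in Lemma \ref{lem:BR1}, the upper bound $\rho(X;\gamma_1)-\rho(X;\gamma_2)$ really dominates $WG_1$ (or $WG_2$ in the symmetric case). Once the total-welfare bound from Proposition \ref{prop:PO} is in hand and one observes that $(\gamma_2,\gamma_2)$ saturates it, nothing else is required.
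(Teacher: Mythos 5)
Your upper-bound argument is correct and genuinely more conceptual than the paper's proof: rather than exhausting the five regions of $\gamma_1^*$, you observe that for any follower-optimal pair $(\hat\gamma_1,\hat\gamma_2)$ one has $WG_1 \le WG_1+WG_2 \le \hat{WG}_1+\hat{WG}_2 \le \rho(X;\gamma_1)-\rho(X;\gamma_2)$ by Proposition \ref{prop:PO} and \eqref{eq:totWG}, and that $(\gamma_2,\gamma_2)$ saturates this bound with $WG_2=0$. This gives the leader's optimality and the welfare split in one stroke, and is cleaner than the paper's case-by-case argument. It also immediately implies that \emph{at every} Stackelberg equilibrium the leader captures the full gain $\rho(X;\gamma_1)-\rho(X;\gamma_2)$, which the paper obtains only after the casework.

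However, there is a genuine gap: the theorem asserts that the Stackelberg equilibrium \emph{is} $(\gamma_2,\gamma_2)$, i.e., uniqueness of the strategy profile, and the paper's proof explicitly poses and settles the question ``whether there are more Stackelberg equilibria'' via its five-case exhaustion over $\gamma_1^*$. Your argument shows $(\gamma_2,\gamma_2)$ attains the maximum and that any Stackelberg equilibrium must also attain it, but it does not rule out another pair $(\gamma_1^*,\gamma_2^*)$ with $\gamma_2^*$ a best response, full welfare $\rho(X;\gamma_1)-\rho(X;\gamma_2)$, and $\hat{WG}_2=0$. Ruling that out requires something like the paper's regional analysis using Lemma \ref{lem:BR1}: on $(\gamma_2,\gamma_1]$ the follower's best response gives $\hat{WG}_2>0$, on $(\gamma_1,\Gamma_1]$ the best response $f_2$ forces $\hat{WG}_1=0$, and on $[0,\gamma_2)\cup(\Gamma_1,1]$ no individually rational contract is available. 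You characterize this step as ``book-keeping'' that is ``not required'' once the bound is in hand; it is in fact required precisely to establish the uniqueness claim, and is not subsumed by the bound.
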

\begin{proof} Let $\gamma_1> \gamma_2$, and first the insurer be the leader and the reinsurer be the follower. Then, it follows from Assumption \ref{ass:M} and \eqref{eq:totWG} that
$\hat{WG_1}(\gamma_1^*,\gamma_2^*)+\hat{WG}_2(\gamma_1^*,\gamma_2^*)$ is positive, and the same for all $\gamma_1^*>\gamma_2^*$ or $\gamma^*_1=\gamma^*_2\in[\gamma_2,\gamma_1]$. Moreover, we readily verify that $\hat WG_2(\gamma_2,\gamma_2)=0$ and $\gamma_2\in \argmax_{\zeta_2\in[0,1]}WG_2(\gamma_2,\zeta_2)$, and thus $(\gamma_2,\gamma_2)$ yields a Stackelberg equilibrium with $WG_1(\gamma_2,\gamma_2)>0$. The question now is whether there are more Stackelberg equilibria. In order words, does there exist $(\gamma_1,\gamma_2^*)\in[0,1]^2$ such that  $\gamma_1^*>\gamma_2^*$ or $\gamma^*_1=\gamma^*_2\in[\gamma_2,\gamma_1]$ with $\hat WG_2(\gamma_1^*,\gamma_2^*)=0$ and $\gamma_2^*\in \argmax_{\zeta_2\in[0,1]}WG_2(\gamma_1^*,\zeta_2)$?

We distinguish five different choices of $\gamma_1^*\in[0,1]$. First, if $\gamma_1^*\in[0,\gamma_2)$, then we showed in the proof of Lemma \ref{lem:BR1} that $\hat{WG}_2(\gamma_1^*,\gamma_2^*)<0$ for all $\gamma_2^*\in[0,\gamma_1^*]$.
Thus, there cannot exist a Stackelberg equilibrium $(\gamma_1^*,\gamma_2^*)$ with $\gamma_1^*\in[0,\gamma_2)$.

Second, if $\gamma_1^*=\gamma_2$, it follows from the fact that $\hat{WG}_2(\gamma_2,\cdot)$ is strictly increasing on $[0,\gamma_2]$ and
$\hat{WG}_2(\gamma_2,\gamma_2)=0,$ that $\hat{WG}_2(\gamma_2,\gamma_2^*)<0$ for all $\gamma_2^*\in[0,\gamma_2)$. Moreover, by definition, it holds that $WG_1(\gamma_2,\gamma_2^*)=0$ for all $\gamma_2^*\in(\gamma_2,1]$. Thus, the only Stackelberg equilibrium $(\gamma_1^*,\gamma_2^*)$ with $\gamma_1^*=\gamma_2$ is $(\gamma_2,\gamma_2)$.

Third, if $\gamma_1^*\in(\gamma_2,\gamma_1]$, then we get from Lemma \ref{lem:BR1} that $\argmax_{\zeta_2\in[0,1]}WG_2(\gamma_1^*,\zeta_2)=\gamma_1^*$. From Lemma \ref{lem:Stack}, we get that $WG_1(\gamma^*_1,\gamma_1^*)<WG_1(\gamma_2,\gamma_2)$, and thus the insurer better plays the strategy $\gamma_1^*=\gamma_2$, which implies that there cannot exist a Stackelberg equilibrium $(\gamma_1^*,\gamma_2^*)$ with $\gamma_1^*\in(\gamma_2,\gamma_1]$.

Fourth, if $\gamma_1^*\in(\gamma_1,\Gamma_1]$, then we get from Lemma \ref{lem:BR1} that $\argmax_{\zeta_2\in[0,1]}WG_2(\gamma_1^*,\zeta_2)=f_2(\gamma_1^*)$. This implies that
$\hat{WG}_1(\gamma_1^*,f_2(\gamma_1^*))=0<WG_1(\gamma_2,\gamma_2)$. Hence, the insurer chooses the strategy $\gamma_1^*=\gamma_2$, which implies that there cannot exist a Stackelberg equilibrium $(\gamma_1^*,\gamma_2^*)$ with $\gamma_1^*\in[\gamma_1,\Gamma_1]$.

Fifth and last, if $\gamma_1^*\in(\Gamma_1,1]$,  then we showed in the proof of Lemma \ref{lem:BR1} that $\hat{WG}_1(\gamma_1^*,\gamma_2^*)\leq \hat{WG}_1(\gamma_1^*,0)< 0$ for all $\gamma_2^*\in[0,\gamma_1^*)$.
Thus, there cannot exist a Stackelberg equilibrium $(\gamma_1^*,\gamma_2^*)$ with $\gamma_1^*\in(\Gamma_1,1]$. This concludes the proof that the only Stackelberg equilibrium is $(\gamma_2,\gamma_2)$ when  the insurer is the leader and the reinsurer is the follower. The proof of the second result is similar, and thus omitted.
\end{proof}
We may now sum up how the situation is formed under the Stackelberg equilibrium argument. We first recall that if $\gamma_1\leq \gamma_2$, there is no contract that yields Pareto improvement to the status quo. For all $(\gamma_1^*,\gamma_2^*)\in[0,1]^2$, we have that $WG_i(\gamma_1^*,\gamma_2^*)=0$, and thus every  $(\gamma_1^*,\gamma_2^*)\in[0,1]^2$ constitutes a Stackelberg equilibrium. On the other hand, if $\gamma_1> \gamma_2$, then we get from \eqref{eq:premium}, \eqref{eq:I*}, and Theorem \ref{th:Stack} that there is full insurance in the Stackelberg equilibrium: $I^*(X)=X$, with the premium given by
\begin{equation*}\pi=\left\{\begin{array}{ll}\rho(X;\gamma_2) &\text{if the insurer is the leader},\\ \rho(X;\gamma_1) &\text{if the reinsurer is the leader}.\end{array}\right.
\end{equation*}
This constitutes a Nash equilibrium, where the leader extracts all welfare gain of the risk-transfer making the follower indifferent between this transaction and the status quo (no reinsurance). Indeed, we readily get that under that premium, the welfare gain of the leader  is $\rho(X;\gamma_1)-\rho(X;\gamma_2)$ and zero for the follower.

Taking into account the discussion after Theorem \ref{th:NE}, we conclude that the unique Stackelberg equilibrium could be seen as an extreme one, in the sense that all the gain is given to the leader. We emphasize again that the parameter $\delta$ does not affect the transaction anymore (recall Remark \ref{r:no delta}), which means that under Stackelberg equilibrium all the market power is transferred to the leader, even if his exogenously given bargaining power $\delta$ or $1-\delta$ is low. This is an important coincidence of the strategic behavior with respect to the agents' risk aversion, which could heavily change the induced welfare gains.

\subsection{On welfare gains}\label{sec:gains comparison}

Since agents play a game for the determination of the reinsurance contract, it is expected that the total welfare gain is less when compared with the non-strategic situation. It turns our however, that this is not the case in our game, which means that under social welfare terms, the Nash equilibrium is again optimal. The only feature that heavily changes is the premium and hence the agents' individual welfare gain.

More precisely, when agents do not act strategically, the Pareto-optimal contract is given by Proposition \ref{prop:PO}, while the premium $\pi$ by \eqref{eq:premium} (with $\zeta_i=\gamma_i$). In particular, when $\gamma_1<\gamma_2$ there is no reinsurance, while when $\gamma_1\geq \gamma_2$ we readily calculate that the corresponding welfare (optimal) gains are given by\begin{align*}
WG_1^o &=(1-\delta)(\rho(X;\gamma_1)-\rho(X;\gamma_2)), \\
WG_2^o &=\delta(\rho(X;\gamma_1)-\rho(X;\gamma_2)).
\end{align*}
Note that in contrast to the Nash equilibrium, the individual gains do depend clearly on the bargaining power $\delta$.

On the other hand, we have seen that the strictly beneficial Nash equilibria are the ones with a common equilibrium risk parameter $\gamma^*\in[\gamma_2,\gamma_1]$ and the individual welfare gains become
\begin{align*}
WG_1(\gamma^*,\gamma^*) &=\rho(X;\gamma_1)-\rho(X;\gamma^*), \\
WG_2(\gamma^*,\gamma^*) &=\rho(X;\gamma^*)-\rho(X;\gamma_2).
\end{align*}
Therefore, although the total welfare after the transaction is the same with and without the game on risk preferences, its allocation between agents may change dramatically.

Note that $\gamma^*\in[\gamma_2,\gamma_1]$ can be chosen such that $WG_i^o=WG_i(\gamma^*,\gamma^*)$ for all $i\in\{1,2\}$, i.e., both equilibria will be the same, however this is a very special case. This is because the underlying concept of $\delta$ as bargaining power of the reinsurer does not work in the Nash equilibrium game, where the agent who wins more after the transaction depends on who has a better position in choosing $\gamma^*$. As mentioned before, Stackelberg equilibrium is just an extreme case, where the leader gets it all.

\begin{example}
We return to Example \ref{ex:2}. Recall $WG(X)=\ln(100)/3$. We derive that {\allowdisplaybreaks\begin{align*}
WG_1(\gamma^*,\gamma^*) &=\frac{\gamma_1-\gamma^*}{\gamma_1-\gamma_2}WG(X)=(\tfrac23-\gamma^*)\ln(100), \\
WG_2(\gamma^*,\gamma^*) &= \frac{\gamma^*-\gamma_2}{\gamma_1-\gamma_2}WG(X)=(\gamma^*-\tfrac13)\ln(100).
\end{align*}}
Note that only when $\delta=(\gamma^*-\gamma_2)/(\gamma_1-\gamma_2)=3\gamma^*-1$, (or equivalent when $\gamma^*$ is chosen such that $\gamma^*=(1+\delta)/3$), we have $WG_i^o=WG_i(\gamma^*,\gamma^*)$ for $i\in\{1,2\}$.
\end{example}

\subsection{A remark on strict monotonicity of $\rho$}\label{subsec:on strict monotonicity}
We recall that in Assumption \ref{ass:M}, we ask the risk measure $\rho$ to be \textit{strictly} increasing in $\gamma$. Suppose, we only require $\rho(I(X);\cdot)$ to be increasing, which holds for instance when $$\rho(I(X);\gamma)=VaR_{\gamma}(I(X)):=\inf\{x\in\mathbb{R}:\mathbb{P}(I(X)\leq x)\geq\gamma\}$$ or $\rho(I(X);\gamma)=CVaR_{\gamma}(I(X))$, where $I\in\mathcal I$, $\gamma\in[0,1]$, and where $CVaR_{\alpha}(I(X))$ is defined in Example \ref{ex:1}. Then, all insurance contracts stated in Proposition \ref{prop:PO} are still Pareto optimal, but there may be more Pareto optimal contracts. As a result, the correspondence in Lemma \ref{lem:BR1} is now only a subset of the best-response correspondence. As a result, the Nash equilibria stated in Proposition \ref{prop:simpleNE} and Theorem \ref{th:NE} are still Nash equilibria. Moreover, the set in \eqref{eq:SNEP} is now a set of Nash equilibria where at least one agent profits compared to the status quo, but it is not necessarily consisting of all such Nash equilibria. Likewise, a set of Stackelberg equilibria is given in Theorem \ref{th:Stack}, but there may be more. To summarize, all results of this paper except the uniqueness would hold true.

\section{Conclusion}\label{sec:conclusion}

This paper introduces a strategic behavior in reinsurance transactions that is based on the argument that both agents have \rev{the motive} to strategically choose the risk preferences that they will appear to have in the transaction. Following the related literature on thin financial markets and non-competitive risk-sharing, we formulate a game according to which agents agree to apply the Pareto-optimal sharing rule for any risk preferences they submit. For the agents' strategic set, we impose a parameterization of the (monotone and comonotonic additive) risk measures that allows us to interpret the agents' choice as a risk-aversion coefficient. In fact, almost all the well-known risk measures are consistent with this parameterization.

After proving the well-posedness of the best-response problem, we show that, at the strictly beneficial Nash equilibrium, agents appear homogeneous with respect to their risk aversions. More importantly, there is no loss of total welfare caused by the game, and only the allocation of welfare gains is affected by agents' strategic behavior. It is furthermore shown that the gain allocation does not depend on agents' (potentially asymmetric) bargaining power, whose influence vanishes through the game procedure.

The exact premium (and hence the allocation of gains) is determined by the choice of the common risk aversion of the equilibrium, where higher value means higher gain for the reinsurer. This needs an extra criterion, and one already used in the related literature, is based on the notion of Stackelberg equilibrium. \rev{In a Stackelberg equilibrium}, one of the agents is the leader who steps first on the best-response procedure and the other follows. It turns out that in our game, the leader is the one that gets all the welfare gain, leaving the follower indifferent between the status quo and the equilibrium reinsurance.

\bigskip

\bibliographystyle{chicago}
\bibliography{Bib_MichailTim}

\end{document}